\RequirePackage{amsmath}
\documentclass[11pt]{llncs}
\usepackage[T1]{fontenc}
\usepackage{amssymb}
\usepackage{amsfonts}
\usepackage{dsfont}
\usepackage[colorlinks=true, citecolor=blue]{hyperref}
\usepackage{color,graphicx,setspace}
\usepackage{geometry}
\usepackage{paralist}
\usepackage{todonotes}
\usepackage{comment} 
\usepackage[dvipsnames]{xcolor}
\usepackage{thmtools}
\usepackage{thm-restate}
\usepackage{cleveref}

\usepackage{amsmath}
\usepackage{tikz}
\usetikzlibrary{arrows.meta}
\usetikzlibrary{decorations.pathreplacing}

\newif\ifanonsubmission
\anonsubmissionfalse

\ifanonsubmission
\newcommand{\Dvir}[1]{}
\newcommand{\Tomohiro}[1]{}
\newcommand{\Danny}[1]{}
\newcommand{\Karl}[1]{}
\else
\newcounter{dvircounter}
\newcommand{\Dvir}[1]{\noindent\textcolor{ForestGreen}{$\ll$\textsf{#1}$\gg$\marginpar{\tiny\bf \textcolor{ForestGreen} {\refstepcounter{dvircounter}Dvir~\thedvircounter}}}}
\newcommand{\Tomohiro}[1]{\noindent\textcolor{blue}{$\ll$\textsf{#1}$\gg$\marginpar{\tiny\bf \textcolor{blue}{Tomohiro}}}}
\newcommand{\Danny}[1]{\noindent\textcolor{red}{$\ll$\textsf{#1}$\gg$\marginpar{\tiny\bf \textcolor{red}{Danny}}}}
\newcommand{\Karl}[1]{\noindent\textcolor{Plum}{$\ll$\textsf{#1}$\gg$\marginpar{\tiny\bf \textcolor{Plum}{Karl}}}}
\fi

\newcommand*{\const}{4}

\title{Generic Swapping for Lawler-Moore Speedups}

\ifanonsubmission
\author{Anonymous Submission}
\institute{}
\else
\author{
Karl Bringmann\inst{1} \and Danny Hermelin\inst{2} \and Tomohiro Koana\inst{3} \and Dvir Shabtay \inst{2}}
\institute{ETH Zurich, Switzerland\thanks{Part of this work was done while K.B.~was affiliated to Saarland University and Max Planck Institute for Informatics, Saarbrücken, Germany, where this work was part of the project TIPEA that has received funding from the European Research Council (ERC) under the European Unions Horizon 2020 research and innovation programme (grant agreement No. 850979).} \\
\email{karl.bringmann@inf.ethz.ch}
\and
Department of Industrial Engineering and Management, Ben-Gurion University of the Negev, Israel
\email{hermelin@bgu.ac.il, dvirs@bgu.ac.il}
\and
Graduate School of Information Science and Technology, The University of Tokyo, Japan\thanks{T.K. was supported in part by JST CREST Grant Number JPMJCR24Q2 and JST ERATO Grant Number JPMJER2301}\\ \email{tomohiro.koana@gmail.com}
}
\fi

\begin{document}

\sloppy
\maketitle

\begin{abstract} 
The Lawler-Moore dynamic programming framework is a cornerstone of scheduling theory, providing a robust approach for solving various parallel machine problems. The framework relies on two essential properties: a regular objective function that is monotone in the completion times of jobs, and a fixed priority ordering, such as Smith's Rule or Jackson's Rule, which dictates the processing sequence on each machine. For fundamental objectives, including minimizing total weighted completion time ($Pm||\sum w_jC_j$), maximum lateness ($Pm||L_{\max}$), and the weighted number of tardy jobs ($Pm||\sum w_jU_j$), this framework yields running times of $O(P^{m-1} \cdot n)$ for the first two problems and $O(P^m \cdot n)$ for the third, where~$P$ denotes the total processing time of all jobs. Recent conditional lower bounds under the Strong Exponential Time Hypothesis (SETH) suggest that the dependence on~$P$ is essentially optimal, but they still leave open the possibility of improving the dependence on the maximum processing time~$p_{\max}$ of any single job.

\hspace{12pt} For the problem of minimizing the total processing time of tardy jobs $Pm||\sum p_jU_j$, this possibility of improving the dependence on $p_{\max }$ was recently realized by Klein, Polak, and Rohwedder (SODA 2023), who bypassed the dependence on $P$ entirely. Their approach relies on a proximity argument to show that optimal schedules maintain a load difference of $O(p_{\max}^2)$ between machines for any prefix set of jobs. In this paper, we show that this structural property can be generalized to a broader class of classic parallel machine scheduling problems handled by the Lawler-Moore framework. While their analysis uses a swapping argument designed for the specific case where $w_j = p_j$, we face richer objective functions - most notably minimizing total weighted completion time $Pm||\sum w_jC_j$, whose objective couples processing times and weights and where a single job swap has a cascading effect on all subsequent completion times. To manage these complexities, we develop an extended swapping argument that tracks these global variations in the objective. This allows us to prove that an optimal schedule preserving the $O(p_{\max}^2)$ load-bounding property still exists for these objectives, enabling state-pruning at each step of the dynamic program to replace the runtime dependence on $P$ with $p_{\max}^2$. Consequently, writing $w_{\max}:=\max_j w_j$, we solve $Pm||\sum w_jC_j$ in $O(\min\{p_{\max},w_{\max}\}^{2m-2} \cdot n)$ time, $Pm||L_{\max}$ in $O(p_{\max}^{2m-2} \cdot n)$ time, and $Pm||\sum w_jU_j$ in $O(p_{\max}^{2m-2} \cdot P \cdot n) \le O(p_{\max}^{2m-1} \cdot n^2)$ time. These algorithms strictly improve upon the original Lawler-Moore bounds whenever $p_{\max} = o(\sqrt{P})$. In particular, the weighted-completion-time and maximum-lateness algorithms are near-linear when $p_{\max}$ is polylogarithmic in~$n$, and the weighted-completion-time algorithm is also near-linear when $w_{\max}$ is polylogarithmic in~$n$.

\end{abstract}

\newpage
\pagenumbering{arabic}
\section{Introduction}

The Lawler-Moore algorithm~\cite{LawlerMoore} stands as a cornerstone of scheduling theory, offering a robust dynamic programming framework for solving a wide range of parallel machine problems in pseudo-polynomial time. This framework is particularly effective for problems characterized by \emph{regular} objective functions that are monotone in the completion times of jobs. Additionally, it requires an easily computed \emph{priority ordering} that dictates the processing sequence once jobs are assigned to their respective machines. By leveraging these two properties, the algorithm formulates the problem as a structured set of recursive subproblems, yielding an efficient dynamic program.

To illustrate the Lawler-Moore approach, consider the problem of minimizing the total weighted completion time on a set of $m=O(1)$ identical parallel machines, denoted by $Pm||\sum w_j C_j$ in standard scheduling notation~\cite{Graham1969}. We are given a set of jobs $J=\{1,\ldots,n\}$, where job~$j$ has an integer \emph{processing time} $p_j \in \mathbb{N}$ and an integer \emph{weight} $w_j \in \mathbb{N}$ representing its relative importance or holding cost. The goal is to schedule the jobs on parallel identical machines $M_1,\ldots,M_m$ so as to minimize their total weighted completion time. By Smith’s Rule~\cite{Smith1956}, there exists an optimal schedule in which the jobs on each machine are processed in non-increasing order of their efficiencies $e_j:=w_j/p_j$. Consequently, a schedule $\sigma$ for $J$ is fully defined by a partition $J = J_1(\sigma) \cup \ldots \cup J_m(\sigma)$ (i.e., a family of $m$ subsets of jobs such that each job appears in exactly one subset), where $J_i(\sigma)$ is the set of jobs assigned to machine~$M_i$. The objective is to find a schedule that minimizes $\sum_j w_j C_j$, where $C_j$ is the completion time of job~$j$ in the Smith-ordered sequence on its assigned machine.

In this context, the Lawler-Moore framework leverages Smith’s Rule~\cite{Smith1956}, which says that the jobs on each machine are processed in non-increasing order of efficiency $e_j:=w_j/p_j$ in any optimal schedule. The dynamic program then defines a state $T_j[P_1,\ldots,P_{m-1}]$ as the minimum cost of scheduling the first $j$ jobs, presorted by efficiency, where $P_i$ represents the total processing time (load) assigned to machine~$M_i$. Consider some state $T_j[P_1,\ldots,P_{m-1}]$. Since the jobs are processed according to Smith’s Rule, we can assume without loss of optimality that no job in $\{1,\ldots,j-1\}$ is scheduled after job~$j$. Consequently, if $j$ is assigned to machine $M_i$, it completes at time $C_j=P_i$; if it is assigned to~$M_m$, it completes at time $C_j=P(J_j)-(P_1+\cdots+P_{m-1})$, where $P(J_j):=p_1+\cdots + p_j$ is the total processing time of the first $j$ jobs. This directly leads to the Lawler-Moore recurrence relation:
$$
T_j[P_1,\ldots,P_{m-1}] = \min
\begin{cases}
\quad T_{j-1}[P_1 - p_j,\ldots,P_{m-1}] \,+\, w_j P_1,\\ 
\quad \vdots\\
\quad T_{j-1}[P_1,\ldots,P_{m-1} - p_j] + w_j P_{m-1},\\
\quad T_{j-1}[P_1,\ldots,P_{m-1}] + w_j \cdot (P(J_j)- \sum_{i=1}^{m-1} P_i), \\
\end{cases}
$$
where $T_0[P_1,\ldots,P_{m-1}]=0$ if $P_1,\ldots,P_{m-1}=0$, and $T_0[P_1,\ldots,P_{m-1}]=\infty$ otherwise.

With slight modifications, the Lawler-Moore framework also applies to several other classic scheduling problems, most notably:
\begin{itemize}
\item The $Pm||L_{\max}$ problem, which asks to minimize the maximum lateness. Here each job $j \in J$ has a \emph{due date} $d_j \in \mathbb{N}$, and the \emph{lateness} of job~$j$ is defined as $L_j=C_j-d_j$.
\item The $Pm||\sum w_jU_j$ problem, which asks to minimize the total weighted number of tardy jobs. A job $j \in J$ is tardy if its completion time~$C_j$ exceeds its due date; this is represented by a binary variable $U_j$ that equals 1 iff $C_j > d_j$.
\end{itemize}
For both objectives, Jackson’s Earliest Due Date (EDD) rule~\cite{Jackson1955} provides the required priority ordering on each of the $m$ machines. For the $Pm||L_{\max}$ problem, there always exists an optimal schedule in which the jobs assigned to each machine are sequenced according to the EDD rule. In contrast, for the $Pm||\sum w_jU_j$ problem, there exists an optimal schedule in which the early jobs on each machine are processed first in EDD order, followed by the tardy jobs in an arbitrary order. Equivalently, the tardy jobs may be viewed as discarded.

While the recurrence for $Pm||L_{\max}$ mirrors the $(m-1)$-dimensional state space of the weighted completion time problem, the $Pm||\sum w_jU_j$ problem requires tracking the loads on all $m$ machines, because the load on the $m$-th machine is no longer implicitly determined once tardy jobs may be discarded. This leads to the following classical result, where~$P=\sum_{j\in J}p_j$:
\begin{theorem}[\cite{LawlerMoore}]
\label{thm:LawlerMoore}%
The $Pm||\sum w_jC_j$ and $Pm||L_{\max}$ problems can both be solved in $O(P^{m-1} \cdot n)$ time, while the $Pm||\sum w_jU_j$ problem can be solved in $O(P^m \cdot n)$ time.    
\end{theorem}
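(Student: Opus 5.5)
The plan is to prove each of the three bounds by a dynamic program over jobs presorted by the appropriate priority rule, and in each case to show two things: correctness via an exchange argument, and running time by counting states and the work per state.

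For $Pm||\sum w_jC_j$ we use the recurrence displayed above. Sort the jobs by non-increasing efficiency $e_j=w_j/p_j$, which takes $O(n\log n)$ time. Correctness rests on Smith's Rule. Some optimal schedule sequences each machine in this order, so it is determined by its partition. When we restrict that schedule to the first $j$ jobs, job $j$ is last on its machine. Its completion time is therefore the current load of that machine, and the load of $M_m$ is implied by $P(J_j)-\sum_{i<m}P_i$.

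An induction on $j$ then shows that $T_j[P_1,\ldots,P_{m-1}]$ equals the minimum cost of assigning jobs $1,\ldots,j$ with the given loads on $M_1,\ldots,M_{m-1}$. Every partition is realized by exactly one sequence of choices in the recurrence, and each choice adds exactly the cost $w_jC_j$ of job $j$. The cost of the jobs before $j$ is unaffected by job $j$, since $j$ comes after them on its machine.

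The table has $n\cdot (P+1)^{m-1}$ entries, because each $P_i\in\{0,\ldots,P\}$. Each entry takes $O(m)=O(1)$ time, and the answer is the minimum over the final layer. This gives $O(P^{m-1}\cdot n)$ for fixed $m$.

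For $Pm||L_{\max}$ we sort by EDD instead, and Jackson's rule gives an optimal schedule that is EDD on every machine. The recurrence is the same with $\max$ in place of $+$: the entry is the minimum of $\max\{T_{j-1}[\ldots],\,C_j-d_j\}$ over the $m$ machine choices. The induction is the same, because the maximum of already-fixed lateness values is determined once job $j$ is appended last. The state count, and hence the running time, is again $O(P^{m-1}\cdot n)$.

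For $Pm||\sum w_jU_j$ we use the fact quoted above that some optimal schedule consists of early jobs in EDD order on each machine, with tardy jobs discarded. Sort by due date and define $T_j[P_1,\ldots,P_m]$ as the minimum weight of tardy jobs among the first $j$, over assignments in which the early jobs put load exactly $P_i$ on $M_i$. The recurrence has two kinds of choice. Job $j$ may be discarded, giving $T_{j-1}[P_1,\ldots,P_m]+w_j$. Alternatively it may be placed on $M_i$, giving $T_{j-1}[\ldots,P_i-p_j,\ldots]$, but only when $P_i\le d_j$; otherwise that option is $\infty$.

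Correctness follows from the EDD structure. Job $j$ is the last early job on its machine among the first $j$ jobs, so it is early exactly when $P_i\le d_j$. Discarded jobs are appended at the end and never delay early ones. Now all $m$ loads are free coordinates, so there are $O(P^m\cdot n)$ states, each computed in $O(m)$ time.

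I expect the main obstacle to be making the exchange step behind each priority rule rigorous. This is the claim that an optimal schedule can be taken as a partition sequenced by Smith or EDD, with tardy jobs moved to the end in the third problem. Once that claim is established, the inductive correctness of each recurrence and the state counting are routine. For the exchange step I would cite the classical results of Smith and Jackson applied machine by machine, together with the standard argument that moving a tardy job to the end of its machine never increases any other job's completion time.
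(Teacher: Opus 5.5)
Your proposal is correct and takes the same route as the paper, which does not prove this cited result but sketches exactly this argument in its introduction. The jobs are presorted by Smith's Rule or EDD, and the Lawler--Moore dynamic program runs over machine loads: there are $m-1$ free coordinates for $\sum w_jC_j$ and $L_{\max}$, since the last load is implied by $P(J_j)$, and all $m$ coordinates for $\sum w_jU_j$, since discarded jobs make the total scheduled load variable.
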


Despite its foundational status, the Lawler-Moore framework has remained essentially stagnant for almost sixty years. Since its introduction, the $O(P^{m-1} \cdot n)$ time complexity has served as an unchallenged baseline for pseudo-polynomial algorithms in parallel machine environments. Recent conditional lower bounds reinforce this baseline, showing that, assuming the Strong Exponential Time Hypothesis (SETH), the running time in Theorem~\ref{thm:LawlerMoore} is essentially optimal for all three problems above.
\begin{theorem}[\cite{BringmannDW26}]
\label{thm:LowerBound}%
Assuming SETH, for any $\varepsilon > 0$, the $Pm||\sum w_jC_j$ and $Pm||L_{\max}$ problems cannot be solved in $P^{m-1-\varepsilon} \cdot 2^{o(n)}$ time, and the $Pm||\sum w_jU_j$ problem cannot be solved in $P^{m-\varepsilon} \cdot 2^{o(n)}$ time.    
\end{theorem}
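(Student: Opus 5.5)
This lower bound is a cited result from \cite{BringmannDW26}, so the plan is a reduction from a SETH-hard problem with the right dependence on the target. I would reduce from $k$-Sum (or Subset Sum in its multi-target form) under the known SETH-based hardness for Subset Sum: there is no $T^{1-\varepsilon}\cdot 2^{o(n)}$ algorithm for Subset Sum with target $T$. To get exponent $m-1$, I would start from the SETH-hard problem of $(m-1)$ simultaneous Subset Sum instances, equivalently a multi-dimensional Subset Sum. Here one must decide whether the items can be split into $m$ groups whose sums hit prescribed targets $T_1,\ldots,T_{m-1}$ with $T_i\approx T$. Such a problem cannot be solved in $T^{m-1-\varepsilon}2^{o(n)}$ time, by a direct extension of the Abboud--Bringmann--Hermelin--Shabtay argument: SETH-hardness via sparsification plus additive encoding of CNF clauses into numbers.

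For $Pm||L_{\max}$, I would make each item a job with $p_j$ equal to the item's value. I would then add, for every machine, a large rigid ``frame'' of jobs with tight due dates. These force any schedule with $L_{\max}\le 0$ to give machine $M_i$ exactly load $T_i$ of item jobs before a certain deadline, so that $P=O(mT)$ and $n$ grows only linearly.

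For $Pm||\sum w_jC_j$, I would set $w_j=p_j$ for item jobs. Then Smith's rule is indifferent, and the objective on a machine with load $L$ is $\tfrac12(L^2+\sum_{j} p_j^2)$, which is convex in the loads. It is minimized exactly when the loads are balanced to prescribed values. Adding padding jobs of different sizes per machine makes ``loads equal to targets'' the unique way to reach a threshold objective value.

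For $Pm||\sum w_jU_j$ with exponent $m$, I would use all $m$ machines with separate targets $T_1,\ldots,T_m$. Common due dates $d_i$ per machine, realized through frame jobs, encode that the early jobs on $M_i$ sum to at most $T_i$. The weights $w_j=p_j$ force the total early processing to equal $\sum T_i$, which is why a full $m$-th dimension of hardness survives.

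The main obstacle is the first step. One needs a fine-grained lower bound for the multi-target Subset Sum variant with exponent matching the number of free load coordinates, while keeping $n$ small enough that the $2^{o(n)}$ factor is meaningful. The padding and frame gadgets, which force distinct machines to play distinct roles despite the machines being identical, are the second delicate point. I would handle them by using frame jobs of geometrically separated sizes so that each machine's frame is identifiable.
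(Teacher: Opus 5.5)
The paper gives no proof of this statement; it imports it from \cite{BringmannDW26}. Measured against that, your proposal has a genuine gap: it rests on a base lower bound that you assert rather than prove.

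The scheduling layer is routine, and simpler than your frame and padding gadgets suggest. Write $L_i$ for the load of machine $M_i$.
\begin{itemize}
\item With $d_j=0$ we get $L_{\max}=C_{\max}$, so deciding $L_{\max}\le P/m$ decides whether a perfect $m$-way partition exists.
\item With $w_j=p_j$, the identity $\sum_j p_jC_j=\tfrac12\sum_i L_i^2+\tfrac12\sum_j p_j^2$ shows that $\sum_j w_jC_j\le P^2/(2m)+\tfrac12\sum_j p_j^2$ holds iff a perfect $m$-way partition exists.
\item With $w_j=p_j$ and the common due date $d_j=P/(m+1)$, we get $\sum_j w_jU_j\le P/(m+1)$ iff a perfect $(m+1)$-way partition exists.
\end{itemize}
So the whole theorem follows once one knows that, under SETH, perfect $k$-way Partition has no $P^{k-1-\varepsilon}\cdot 2^{o(n)}$ algorithm for fixed $k$. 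That multi-way lower bound is the substance of the result. It is exactly the step you call ``a direct extension'' of the Abboud--Bringmann--Hermelin--Shabtay Subset Sum reduction and then defer as ``the main obstacle.''

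It is not a direct extension, for the following reasons.
\begin{itemize}
\item \emph{One block of variables per machine.} The obvious route to exponent $m-1$ splits the $N$ variables into $m-1$ blocks and lets machine $M_i$ carry block $i$, with $T\approx 2^{N/(m-1)}$. Clauses mix variables from different blocks, but each load is compared only with its own target. The machines interact only through the requirement that every job lands somewhere. In that scheme, no load is ever positioned to check a clause spanning two blocks.
\item \emph{Identical machines.} A job adds the same scalar $p_j$ to whichever machine receives it. You cannot give it machine-specific digits as in vector Subset Sum.
\item \emph{Digit-splitting one instance.} The other natural route takes one Subset Sum instance with target $T^{m-1}$ and splits each number into $m-1$ base-$T$ digit-jobs, one per machine. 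This needs an all-or-nothing coupling: all digit-jobs of an item must be selected or discarded together. Tagging digit-jobs with large separated offsets can force the right \emph{number} of digit-$k$ jobs onto the right machine. It does not force them to come from the same set of items.
\end{itemize}
Nothing in the one-dimensional reduction supplies that coupling while keeping $P=T\cdot 2^{o(n)}$. Your geometrically separated frames only fix which machine plays which role, not this coupling. As written, the proposal reduces the theorem to an unproved lemma of essentially the same strength.
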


Since the total processing time $P$ can be as large as $p_{\max} \cdot n$, where $p_{\max}$ is the maximum processing time of any single job, the most natural path toward algorithmic improvement is to refine the dependence on $p_{\max}$. While structural approaches like $n$-fold integer programming~\cite{KnopKoutecky18} or thin constraint matrices~\cite{HermelinShabtay2026} offer parameterization tools for related settings, they either lack efficient pseudo-polynomial running times, or scale poorly when instances contain arbitrary weights, non-linear objectives, or multiple distinct due dates. 

A major breakthrough in improving the dependence on $p_{\max}$ was recently achieved by Klein, Polak, and Rohwedder~\cite{Klein0R23} for the problem of minimizing the total processing time of tardy jobs $Pm||\sum p_jU_j$. This setting corresponds to the special case of the $Pm||\sum w_jU_j$ problem where weights are equal to processing times, \emph{i.e.} $w_j = p_j$ for each job $j \in J$. By introducing a proximity argument and a specialized job-swapping technique, they proved a fundamental structural property: There always exists an optimal schedule where the load difference between any two machines is bounded by $O(p_{\max}^2)$ for any prefix set of jobs (see Fig.~\ref{fig:Structure}). This bounded prefix-load property allowed them to prune redundant states at each step of the Lawler-Moore dynamic program, bypassing the traditional dependence on the total processing time $P$ entirely, and yielding the following bound:
\begin{theorem}[\cite{Klein0R23}]
The $Pm||\sum p_jU_j$ problem can be solved in $O(p^{2m-2}_{\max} \cdot n)$ time.
\end{theorem}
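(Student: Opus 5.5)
Following Klein, Polak, and Rohwedder, the plan has two parts. First, a structural proximity lemma: some optimal schedule keeps all prefix load differences small. Second, the Lawler--Moore dynamic program restricted to the states that this lemma allows.

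Sort the jobs in EDD order, $d_1\le\dots\le d_n$. For $Pm||\sum p_jU_j$ we may assume the early jobs on each machine run in EDD order and the tardy jobs are discarded. A schedule is then a partition into sets $E_1,\dots,E_m$ of early jobs plus a set of discarded jobs. Each $E_i$ must be EDD-feasible: for every $j\in E_i$, $p(E_i\cap[j])\le d_j$.

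\textbf{Step 1 (proximity lemma).} Among all optimal schedules, take one minimizing a potential such as $\sum_j\sum_i p(E_i\cap[j])^2$. We claim that for every prefix $[j]$ and every pair of machines $i,i'$ we have $|p(E_i\cap[j])-p(E_{i'}\cap[j])|\le c\,p_{\max}^2$.

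Suppose instead that some prefix $[j]$ has $L_i-L_{i'}>c\,p_{\max}^2$, where $L_i=p(E_i\cap[j])$. The key tool is a Steinitz / pigeonhole-type exchange lemma. Given a multiset $A$ of integers in $[1,p_{\max}]$ with sum at least $p_{\max}^2$ and a multiset $B$ of the same kind, there exist nonempty $A'\subseteq A$ and $B'\subseteq B$ (with $B'$ possibly empty if $B$ is small) whose sums are equal. This is the classical lemma that two sequences of numbers in $[1,p_{\max}]$, each with sum at least $p_{\max}$, have nonempty subsequences with equal sums.

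Apply this to jobs of $E_i\cap[j]$ versus $E_{i'}\cap[j]$, or versus the jobs of $E_{i'}$ after $j$, depending on the case. Swapping equal-sum sets keeps every prefix load on both machines unchanged from the swap point onwards. Because the sets have equal total processing time, the objective $\sum p_jU_j$ is unchanged.

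Feasibility is the subtle part. We must choose the swapped subsets so that no prefix load increases beyond a due date. We move a set of jobs from the overloaded machine $i$ to $i'$ and an equal-sum set back. The prefix loads of $i'$ in between increase by at most the imbalance. Since $L_i$ exceeds $L_{i'}$ by more than $c\,p_{\max}^2$, machine $i$ was feasible at the corresponding due dates, so $i'$ has slack.

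Concretely, the argument takes jobs from the last $p_{\max}$ units of slack region, so that every intermediate prefix on $i'$ stays below the corresponding prefix of $i$, hence below the due dates. If machine $i'$ lacks suitable jobs to swap back, we instead move a set from $i$ to $i'$ without a return set. This is still feasible and strictly decreases the potential while preserving the objective.

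\textbf{Step 2 (the algorithm).} Run the Lawler--Moore DP over $j$ in EDD order, with state $(L_1,\dots,L_m)$ of early prefix loads and value the maximum processed early volume. The cost is measured as $P$ minus the early volume. By Step 1 we may discard every state with $\max_i L_i-\min_i L_i>c\,p_{\max}^2$.

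For fixed $j$, the total early load $\sum_i L_i$ is determined by the value in an exchange sense. Since $w_j=p_j$, early volume equals $\sum_i L_i$, so the objective is a function of the loads. We therefore keep, for each vector of differences $(L_i-L_m)_{i<m}$, only the state with the largest $\sum_i L_i$. This is justified by an exchange argument: a larger $L_m$ with the same differences dominates in value. The domination also needs to be compatible with future feasibility, since larger loads may make later jobs infeasible. Here we rely on the fact that the DP only needs the maximum over completions, and that the proximity lemma holds for some optimal schedule, so the optimal trajectory is consistent with greedily keeping the per-difference-vector best state.

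This yields $O(p_{\max}^{2(m-1)})$ states per step and $O(m)$ transitions each, for $O(p_{\max}^{2m-2}\cdot n)$ total time.

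\textbf{Main obstacle.} The hardest step is the feasibility-preserving exchange in Step 1: choosing equal-sum subsets positioned so that no intermediate EDD prefix constraint is violated. A secondary difficulty is justifying the state dominance in Step 2. I would first try to handle it by showing that, among schedules with the same difference vector, the proximity lemma can be applied to the lexicographically best one.
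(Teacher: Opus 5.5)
\textbf{The genuine gap is in Step~2.} Your proximity lemma controls only the differences $L_i-L_m$; it says nothing about the common level. The dominance you invoke (``for each difference vector keep only the state with the largest $\sum_i L_i$'') is false, because a larger early volume now can block later jobs.

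Take $m=1$, where the difference vector is empty and your DP keeps a single state per step. Let the jobs in EDD order be $(p_j,d_j)=(3,3),(2,4),(2,4)$. The DP keeps load $3$ after job~$1$, can never add job~$2$ or job~$3$, and returns early volume $3$. Scheduling jobs $2,3$ instead gives $4$.

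Duplicating every job gives the same failure for $m=2$. After the two jobs with $(p_j,d_j)=(3,3)$, the states $(0,0)$ and $(3,3)$ have the same difference, so $(0,0)$ is discarded. The optimal load vector $(4,4)$ then becomes unreachable, and the DP returns $7$ instead of $8$.

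More fundamentally, for $m=1$ your algorithm would run in $O(n)$ time. But $1||\sum p_jU_j$ with a common due date is Subset Sum, so no argument that prunes by difference vectors alone can be correct. (The bound as literally stated cannot hold for $m=1$ unless $\mathrm{P}=\mathrm{NP}$.)

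The paper does not reprove this cited result. Its own pruned DP for $\sum w_jU_j$ shows what is forced: it keeps the absolute load $P_m\in\{0,\dots,P\}$ in the state and pays $O(p_{\max}^{2m-2}\cdot P\cdot n)$. Removing the factor $P$ needs a second, different proximity statement that pins down the level. For example, you could show that for every $j$ the total early load on $J_j$ of some optimal schedule lies in a window of width $p_{\max}^{O(1)}$ around a precomputable reference value. That compares an optimal solution with another solution rather than machines with each other, and it adds a further dimension of that width to the state space.

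\textbf{Step~1 is close to the paper but imprecise in ways that matter.} It corresponds to Theorem~\ref{thm:GenericStructure} together with Lemma~\ref{lem:wjUj-SwapOpt}. The equal-sum lemma you quote is false: with $p_{\max}=5$, $A=\{5\}$ and $B=\{4,4\}$ both have sum at least $p_{\max}$, but no nonempty subsets of equal sum. You need a cardinality condition such as $|A|,|B|\ge 2p_{\max}$, as in Lemma~\ref{lem:equal-sum-set}. Also, ``nonempty, with $B'$ possibly empty'' is self-contradictory.

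Swapping two prefix sets, which is one of your options, leaves the gap at step $j$ unchanged. The swap that works is the paper's. Take the last $2p_{\max}$ jobs of $J_{h,j^*}$ on the overloaded machine and the first $2p_{\max}$ later jobs on the underloaded machine. A gap of at least $4p_{\max}^2$ ensures $J_I$ finishes before $J_H$ starts. Hence jobs of $J_H$ only move earlier, and moved jobs of $J_I$ finish by $C_{j^*}\le d_{j^*}\le d_j$ (Lemmas~\ref{lem:Swap-Cjs} and~\ref{lem:Swap-Cjs2}).

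When $M_i$ has fewer than $2p_{\max}$ later jobs, move $j^*$ alone to $M_i$ instead of using the paper's leveled-schedule argument. With these two concrete moves, your quadratic potential does strictly decrease. So Step~1 is repairable and is a reasonable alternative to the paper's extremal choice of $j^*$ and $\delta_{\min}$. Step~2 is the part that needs a new idea.
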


\subsection{Our results}

The main contribution of this paper is a generalized framework for accelerating the classical Lawler-Moore algorithm, showing that the $O(p_{\max}^{2m-2} \cdot n)$ runtime bound recently obtained for the $Pm||\sum p_jU_j$ problem~\cite{Klein0R23} can also be achieved for more complex parallel machine objectives. Our framework yields the following results, where $w_{\max}:=\max_{j\in J}w_j$ denotes the maximum weight:
\begin{theorem}
\label{thm:main}%
The $Pm||\sum w_jC_j$ problem can be solved in $O(\min\{p_{\max},w_{\max}\}^{2m-2} \cdot n)$ time, the $Pm||L_{\max}$ problem can be solved in $O(p^{2m-2}_{\max} \cdot n)$ time, and the $Pm||\sum w_jU_j$ problem can be solved in time $O(p^{2m-2}_{\max} \cdot P \cdot n) \le O(p^{2m-1}_{\max} \cdot n^2)$.
\end{theorem}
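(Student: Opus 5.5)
The plan has two parts. First, a structural \emph{bounded prefix-load lemma}: for each problem there is an optimal schedule $\sigma$ such that, for every prefix $J_j=\{1,\ldots,j\}$ of the priority-sorted jobs and every two machines $M_i,M_{i'}$, the prefix loads differ by $O(p_{\max}^2)$:
$$|P(J_j\cap J_i(\sigma))-P(J_j\cap J_{i'}(\sigma))|=O(p_{\max}^2).$$
For $\sum w_jC_j$ we also want the symmetric statement with $w_{\max}^2$ in place of $p_{\max}^2$. The order is Smith for $\sum w_jC_j$ and EDD for the other two. For $\sum w_jU_j$ only the early jobs count.

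Second, we prune the Lawler--Moore dynamic program using this lemma. Given the lemma, the state $T_j[P_1,\ldots,P_{m-1}]$ only needs to be stored when every $P_i$ lies within $O(p_{\max}^2)$ of the average load $P(J_j)/m$, which is known in advance. This leaves $O(p_{\max}^{2(m-1)})$ states per $j$, and each transition costs $O(m)=O(1)$. This gives $O(p_{\max}^{2m-2}n)$ for $\sum w_jC_j$ and $L_{\max}$. For $\sum w_jU_j$ the total early load is not determined by $j$, so we keep one free coordinate: the early load $L\le P$ of one machine, with the other $m-1$ machines restricted to a window of width $O(p_{\max}^2)$ around it. This gives $O(p_{\max}^{2m-2}P n)$ states, and $P\le p_{\max}n$ yields the second bound.

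For the $w_{\max}$ bound, I would run a DP symmetric in weights. Reversing Smith's order and swapping the roles of $p$ and $w$ preserves $\sum w_jC_j$ up to the constant $\sum_j w_jp_j$. This is because $\sum_j w_jC_j=\sum_{k\preceq j \text{ on a machine}} w_jp_k$, which is symmetric in $(p,w)$ under order reversal. So we may index states by prefix weights, use the weight-version of the lemma, and take whichever of $p_{\max},w_{\max}$ is smaller.

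The lemma itself is proved by the Klein--Polak--Rohwedder swap argument, adapted to each objective. Take an optimal schedule minimizing a potential, for instance the lexicographic vector of prefix-load imbalances, or $\sum_j$ of squared prefix loads. Suppose that at some prefix $J_j$ machine $M_a$ exceeds $M_b$ by more than $c\,p_{\max}^2$. Then $M_a$ has many more jobs in the prefix than $M_b$, more than $c\,p_{\max}$ of them. Group these jobs by processing time; there are at most $p_{\max}$ classes. A pigeonhole or zero-sum argument, in the style of Steinitz or Erd\H{o}s--Ginzburg--Ziv, then finds a nonempty $A\subseteq J_a\cap J_j$ and a $B\subseteq J_b$ with $p(A)=p(B)$, where $B$ is taken from $M_b$'s later jobs, possibly empty only in degenerate cases. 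We then exchange $A$ and $B$, or move just $A$ when $M_b$ is underloaded throughout. Because $p(A)=p(B)$, the completion times of all jobs after the exchanged block are unchanged. Hence the effect on $L_{\max}$ and on tardiness is local, and one checks it using the EDD order and the fact that $A$ lies earlier than $B$. The imbalance potential strictly decreases, contradicting minimality.

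The main obstacle is $\sum w_jC_j$. Moving a set $A$ changes the completion times of every job between positions on both machines, so the cost change is a bilinear expression in the loads and weights of the interleaved jobs. Equal total processing time alone does not make this change nonnegative. My plan is to exploit optimality in both directions. Consider both the exchange $A\leftrightarrow B$ and its mirror $A'\leftrightarrow B'$ chosen from the symmetric side, and write the cost change via the pairwise formula $\sum w_jp_k$ as $\Delta=\sum_{x\in A,y\in B}(\ldots)$ plus a term equal to the load gap times a weight difference. If the gap exceeds $p_{\max}^2$, then one of the two exchanges has $\Delta\le 0$ by the Smith ordering of efficiencies. This is where the extra factor of $p_{\max}$ in the bound is needed: we choose $A,B$ with equal processing time and with efficiencies sandwiched appropriately, so that the cascading terms telescope. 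Tie-breaking by the potential then gives strict improvement. I expect making this sign argument rigorous to be the hardest step.
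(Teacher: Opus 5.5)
Your outer architecture matches the paper: an equal-sum exchange gives an $O(p_{\max}^2)$ prefix-load bound, the Lawler--Moore table is pruned to a window around $P(J_j)/m$ (with one extra absolute-load coordinate for $\sum w_jU_j$), and the $w_{\max}$ bound comes from the reversal duality $p_j\leftrightarrow w_j$. That duality preserves $\sum w_jC_j$ exactly, not just up to an additive constant.

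The genuine gap is the step you flag yourself. You give no argument that the balancing exchange does not increase $\sum w_jC_j$, and the plan you sketch would not supply one. Appealing to a mirror exchange defeats the potential argument. If the exchange with $\Delta\le 0$ is the mirror one, it pushes prefix load back onto the heavy machine, your imbalance potential goes up, and the minimality argument collapses. No extra factor of $p_{\max}$ is needed for the weighted objective either. The $p_{\max}^2$ comes entirely from needing $2p_{\max}$ items of size at most $p_{\max}$ in the equal-sum lemma, and the same $4p_{\max}^2$ threshold serves all three objectives. Your exchange is also underspecified: $A$ is an arbitrary subset of $J_a\cap J_j$, and the ``move just $A$'' fallback changes final loads. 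That is why the cost change looks to you like an uncontrolled bilinear cascade.

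The missing idea is to fix the blocks rigidly and then use an invariant.
\begin{itemize}
\item \textbf{Choosing the blocks.} At the first step $j^*$ where some gap $P(J_{h,j^*})-P(J_{i,j^*})$ reaches $4p_{\max}^2$, job $j^*$ necessarily sits on $M_h$. Let $J_H$ be the last $2p_{\max}$ jobs of $J_{h,j^*}$, and let $J_I$ be the first $2p_{\max}$ jobs on $M_i$ after step $j^*$. These exist because you may first level the final loads to within $p_{\max}$ by moving last jobs, which costs nothing for a regular objective.
\item \textbf{The swap.} Since $P(J_H),P(J_I)\le 2p_{\max}^2$, all of $J_I$ finishes before $J_H$ starts. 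Put $J_{H''}$ followed by $J_{I'}$ in the slot of $J_H$, and $J_{H'}$ followed by $J_{I''}$ in the slot of $J_I$, where $P(J_{H'})=P(J_{I'})$.
\item \textbf{Effect on completion times.} Jobs outside $J_H\cup J_I$ keep their completion times. Jobs of $J_H$ (indices $\le j^*$) finish no later, and jobs of $J_I$ (indices $>j^*$) finish no earlier. All moved jobs finish by $C_{j^*}$. Together with EDD, these facts settle $L_{\max}$ and $\sum w_jU_j$.
\item \textbf{The invariant for $\sum w_jC_j$.} On each machine, for any non-idling order, $\sum_{j\in J_i} p_jC_j=\tfrac12\bigl(P(J_i)^2+\sum_{j\in J_i}p_j^2\bigr)$. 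Since final loads are unchanged, $\sum_j p_jC_j$ is unchanged. With $e:=e_{j^*}$, the cost change is therefore $\sum_j (w_j-ep_j)(C'_j-C_j)$. Smith's order gives $w_j-ep_j\ge 0$ on $J_H$ and $\le 0$ on $J_I$, so every term is non-positive. This holds for every proper schedule, with no appeal to optimality in both directions.
\end{itemize}

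You also still owe the check that the exchange creates no imbalance at earlier prefixes. The paper handles this by taking a leveled optimal schedule that maximizes the first unbalanced step $j^*$ and then minimizes the gap there. It then verifies case by case that all steps $j<j^*$ stay balanced.
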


For total weighted completion time, the stronger dependence on $\min\{p_{\max},w_{\max}\}$ follows by combining our algorithm with a processing-time/weight duality observed by Hall and Chudak and recorded by Goemans and Williamson~\cite[pp.~283--284]{GoemansWilliamson00}. We state and prove the parallel-machine form of this duality in Lemma~\ref{lem:WCT-duality}.

A central component of our framework is an extended swapping argument that generalizes the argument previously used for $Pm||\sum p_jU_j$. Swapping arguments that exchange the positions of jobs in a schedule have long been used to prove the optimality of priority rules such as Smith’s Rule and Jackson’s Rule. However, whereas these traditional arguments typically focus on pairs of jobs, both the approach in~\cite{Klein0R23} and our framework swap subsets of jobs using a proximity argument driven by a lemma from additive combinatorics. Specifically, this lemma states that sufficiently large multisets over a common set of integers $\{1,\ldots,U\}$ must have submultisets whose elements have an equal sum.

\begin{figure}[h]
\centering
\begin{tikzpicture}[x=1.2cm, y=1.2cm]
\tikzset{job/.style={draw=black, fill=gray!15, thin}}
\tikzset{futurejob/.style={draw=black, fill=gray!60, thin}} 
\draw[thick, ->] (0,0) -- (8,0) node[right] {Time};
\draw[thick, ->] (0,0) -- (0,4.5) node[above] {Machines};
\node at (-0.5, 4) {$M_4$};
\foreach \x/\w in {0/0.4, 0.4/0.3, 0.7/0.6, 1.3/0.2, 1.5/0.5, 2.0/0.4}
\draw[job] (\x, 3.7) rectangle (\x+\w, 4.3);
\foreach \x/\w in {2.4/0.5, 2.9/0.3, 3.2/0.6, 3.8/0.4, 4.2/0.7, 4.9/0.3, 5.2/0.5, 5.7/0.4}
\draw[futurejob] (\x, 3.7) rectangle (\x+\w, 4.3);
\node at (-0.5, 3) {$M_3$};
\foreach \x/\w in {0/0.5, 0.5/0.7, 1.2/0.4, 1.6/0.6, 2.2/0.3, 2.5/0.5, 3.0/0.3}
\draw[job] (\x, 2.7) rectangle (\x+\w, 3.3);
\foreach \x/\w in {3.3/0.4, 3.7/0.6, 4.3/0.3, 4.6/0.8, 5.4/0.4, 5.8/0.5, 6.3/0.3}
\draw[futurejob] (\x, 2.7) rectangle (\x+\w, 3.3);
\node at (-0.5, 2) {$M_2$};
\foreach \x/\w in {0/0.3, 0.3/0.5, 0.8/0.2, 1.0/0.4, 1.4/0.6, 2.0/0.3, 2.3/0.5, 2.8/0.4}
\draw[job] (\x, 1.7) rectangle (\x+\w, 2.3);
\foreach \x/\w in {3.2/0.6, 3.8/0.4, 4.2/0.5, 4.7/0.3, 5.0/0.7, 5.7/0.6, 6.3/0.4}
\draw[futurejob] (\x, 1.7) rectangle (\x+\w, 2.3);
\node at (-0.5, 1) {$M_1$};
\foreach \x/\w in {0/0.4, 0.4/0.2, 0.6/0.5, 1.1/0.3, 1.4/0.4, 1.8/0.6, 2.4/0.2, 2.6/0.5, 3.1/0.4, 3.5/0.6, 4.1/0.3, 4.4/0.8}
\draw[job] (\x, 0.7) rectangle (\x+\w, 1.3);    
\foreach \x/\w in {5.2/0.4, 5.6/0.5, 6.1/0.6, 6.7/0.4, 7.1/0.3}
\draw[futurejob] (\x, 0.7) rectangle (\x+\w, 1.3);
\draw[dashed, gray] (2.4, 4.3) -- (2.4, -0.5); 
\draw[dashed, gray] (5.2, 1.3) -- (5.2, -0.5);     
\draw[thick, <->] (2.4, -0.3) -- (5.2, -0.3) 
node[midway, below] {$O(p_{\max}^2)$};
\end{tikzpicture}
\caption{Gantt chart depicting machine loads. Light gray blocks represent the jobs $\{1, \dots, j\}$ already considered in the dynamic programming state, while dark gray blocks represent future jobs in the priority sequence. The state space is pruned by ensuring the load difference between any two machines at step $j$ remains within $O(p_{\max}^2)$.}
\label{fig:Structure}
\end{figure}
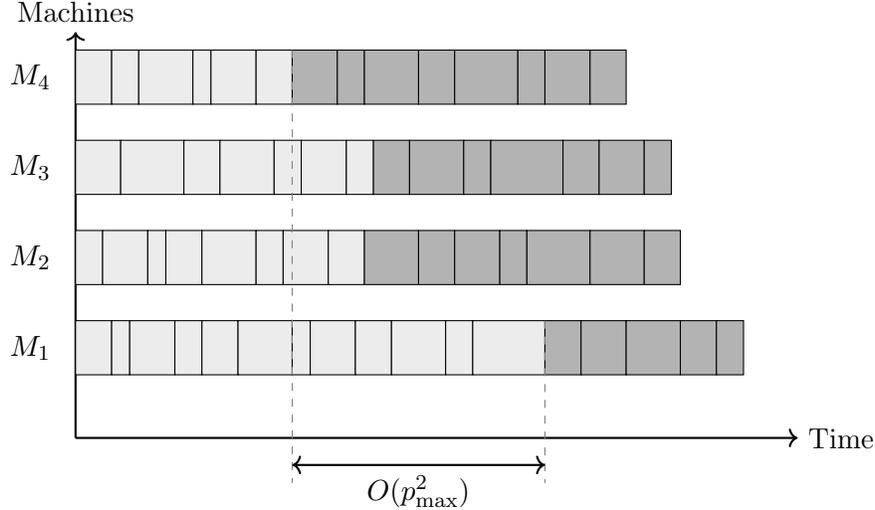

While the approach in~\cite{Klein0R23} utilized this additive combinatorics lemma via a swapping technique designed strictly for minimizing tardy processing times, our framework extends the mechanism to accommodate these more complex objective functions. This allows us to prove that an optimal schedule preserving the $O(p_{\max}^2)$ load-bounding property still exists even under the effects of arbitrary weights and priority-driven sequencing constraints (see Fig.~\ref{fig:Structure}). By identifying these equal-sum subsets under more complex objectives, we show that any partial schedule that strays too far from a balanced configuration can be transformed into one that is just as good, or even better, safely pruning those states in the Lawler-Moore dynamic program.

\subsection{Related work}

We briefly review the state-of-the-art results for the $Pm||\sum w_jC_j$, $Pm||L_{\max}$, and $Pm||\sum w_jU_j$ problems, as well as their generalizations to arbitrary $m$. Our discussion focuses exclusively on exact algorithms and hardness results for these three problems.

While the single-machine $1||\sum w_jC_j$ problem is efficiently solved by Smith’s Rule in $O(n \log n)$ time, this result only extends to the parallel machine unweighted counterpart for an arbitrary number of machines (the $P||\sum C_j$ problem)~\cite{ConwayMM1967}. The weighted version $Pm||\sum w_jC_j$ is NP-hard for $m \geq2$, even when $p_j =w_j$ for all jobs $j \in J$~\cite{BrunoCoffman}, while $P||\sum w_jC_j$ (where~$m$ is arbitrary) is strongly NP-hard~\cite{GareyJohnson}. On the other hand, the problem is only weakly NP-hard for fixed~$m$, due to Lawler and Moore’s pseudo-polynomial algorithm~\cite{LawlerMoore}. Furthermore, Knop and Kouteck{\'{y}}~\cite{KnopKoutecky18} showed that $P||\sum w_j C_j$ is fixed-parameter tractable (FPT) for parameter $k= \max \{p_{\max},w_{\max}\}$, by presenting an algorithm for the problem running in $k^{O(k^2)} \cdot n^{O(1)}$ time. 

The single-machine $1||L_{\max}$ problem is optimally solved by Jackson’s rule in $O(n \log n)$ time~\cite{Jackson1955}. However, the problem is significantly more difficult on identical parallel machines, as $P||L_{\max}$ generalizes the classical makespan minimization problem $P||C_{\max}$, even when jobs have precedence constraints and non-trivial release times~\cite{LagewegEtAl1976}. Since $P||C_{\max}$ is NP-hard for any fixed $m \geq 2$~\cite{LenstraEtAl77}, and strongly NP-hard for arbitrary~$m$, this also holds for $P||L_{\max}$. On the positive side, the problem remains solvable in $O(P^{m-1} \cdot n)$  time via the Lawler-Moore framework~\cite{LawlerMoore}. More recently, Hermelin and Shabtay~\cite{HermelinShabtay2026} utilized thin ILPs to provide an algorithm running in $\widetilde{O}(p^{d_{\#}(m+1)}_{\max}+d_{\#}n)$ time, where $d_{\#}$ denotes the number of distinct due dates, and $\widetilde{O}$ is used to suppress polylogarithmic factors.

Regarding the weighted number of tardy jobs, the unweighted single machine $1||\sum U_j$ problem is solved in $O(n \log n)$ time by the Moore-Hodgson algorithm~\cite{Moore1968}. On parallel machines the problem is more difficult, as $P||\sum w_jU_j$ generalizes $P||C_{\max}$, making it weakly NP-hard whenever $m \geq 2$~\cite{LenstraEtAl77}, and strongly NP-hard for arbitrary~$m$~\cite{GJ78}. The weighted $1||\sum w_jU_j$ problem is also NP-hard as it generalizes $P2||C_{\max}$. As mentioned above, $Pm|| \sum w_jU_j$ is solvable in $O(P^m n)$ time using the Lawler-Moore framework~\cite{LawlerMoore}. This running time can be improved for the single machine case and small values of $d_{\#}$ using $(\max,+)$-convolutions~\cite{HermelinMS24}. For the special case of minimizing the total processing time of tardy jobs (the $Pm||\sum p_jU_j$ problem where $w_j = p_j$), a series of recent improvements~\cite{BringmannFHSW22,SchieberS23} culminated in a proximity-driven state-pruning technique by Klein, Polak, and Rohwedder~\cite{Klein0R23}, yielding a runtime of $O(p_{\max}^{2m-2} \cdot n)$. Fischer and Wennmann~\cite{FischerWennmann25} later presented an alternative algorithm for this restricted problem variant running in $O(P^m)$ time. Finally, $1||\sum w_jU_j$ is FPT for any of the parameter pairs $d_{\#}+p_{\#}$, $d_{\#}+w_{\#}$, or $p_{\#}+w_{\#}$ (where~$p_{\#}$ and~$w_{\#}$ respectively denote the number of distinct processing times and weights in the input)~\cite{HermelinKPShabtay21}, a result extending also to a parameterized number of parallel machines, but remains $W[1]$-hard or NP-hard for these as single parameters~\cite{HeegerH24}.

\section{Preliminaries}

We consider a set of jobs $J=\{1,\ldots,n\}$, given offline, to be scheduled non-preemptively on a set of~$m$ identical parallel machines $M_1,\ldots,M_m$, for some fixed constant $m=O(1)$. The \emph{processing time}, \emph{weight}, and \emph{due date} of job $j \in J$ are denoted by $p_j$, $w_j$, and~$d_j$, respectively; we assume that $p_j$ and $w_j$ are positive integers. The \emph{efficiency} of job~$j$ is defined by~$e_j:=w_j/p_j$. We denote the total processing time of all jobs by $P:= \sum_j p_j$, the maximum processing time by $p_{\max} = \max_j p_j$, and the maximum weight by $w_{\max}=\max_j w_j$. For a subset of jobs $J' \subseteq J$, we write $P(J')=\sum_{j\in J'}p_j$ for the total processing time of the jobs in $J'$ (so, in particular, $P=P(J)$).

Throughout this paper, we assume that the jobs in $J$ are preordered according to a fixed priority ordering consistent with the specific objective function being minimized.
\begin{itemize}
\item For $Pm||\sum w_jC_j$, jobs are ordered such that $e_1 \geq e_2 \geq \cdots \geq e_n$.
\item For $Pm||\sum w_jU_j$ and $Pm||L_{\max}$, jobs are ordered $d_1 \leq d_2 \leq \cdots \leq d_n$. 
\end{itemize}
Under this assumption, whenever two jobs $j_1 < j_2$ are assigned to the same machine in an optimal schedule, there exists an optimal schedule in which $j_1$ is processed before $j_2$. We refer to such schedules as \emph{proper schedules}. 

Since the processing order is predetermined in a proper schedule, any such schedule $\sigma$ is determined by a partition $\sigma=\{J_1(\sigma),\ldots,J_m(\sigma)\}$, where $J_i(\sigma) \subseteq J$ is the set of jobs assigned to machine $M_i$ for all $i \in \{1,\ldots,m\}$. Here, we say that $J_1(\sigma),\ldots,J_m(\sigma) \subseteq J$ is a partition of $J$ if each job in $J$ appears in exactly one subset $J_i(\sigma)$. For $j \in J$, we let $J_j=\{1,\ldots,j\}$ denote the set of all jobs that precede $j$ in the priority ordering, and define $J_{i,j}(\sigma):=J_i(\sigma) \cap J_j$. The \emph{load on~$M_i$ at step~$j$ of~$\sigma$} is the total processing time of jobs in $J_{i,j}(\sigma)$, namely $P(J_{i,j}(\sigma))$. Because the total load on all machines at step~$j$ equals $P(J_j)$, the load of the $m$-th machine is implicitly determined by the loads of the first $m-1$ machines: 
$$
P(J_{m,j}(\sigma))=P(J_j)-\sum^{m-1}_{i=1} P(J_{i,j}(\sigma)).
$$
If job $j$ is assigned to machine $M_i$ in $\sigma$, then the \emph{completion time} $C_j(\sigma)$ of~$j$ is the load on~$M_i$ at step~$j$, \emph{i.e.} $C_j(\sigma)=P(J_{i,j}(\sigma))$. Throughout the paper we consider only \emph{regular} objective functions, \emph{i.e.} objective functions that are non-decreasing in the completion times $C_1(\sigma),\ldots,C_n(\sigma)$. Observe that $\sum w_jC_j$, $L_{\max}$, and $\sum w_jU_j$ are all regular.

\section{Swapping via Additive Combinatorics}
\label{sec:Swap}%

We now describe a general swapping argument that can be applied to various scheduling problems on parallel machines. Let $f$ be an arbitrary regular objective function that admits a priority ordering. Broadly speaking, given any proper schedule $\sigma$ for~$J$, the swap defined below, when admissible, produces an alternative proper schedule~$\sigma''$ for $J$ (the notation $\sigma'$ is reserved for an intermediate schedule produced by the swap). We say that the swap is \emph{optimal} for $f$ if $f(\sigma'') \leq f(\sigma)$ for every proper schedule $\sigma$. The goal of this swapping argument is the following generic structure theorem, which allows us to prune the state space of the Lawler-Moore dynamic program. 
\begin{restatable}{theorem}{genericstructure}
\label{thm:GenericStructure}
Let $f$ be a regular objective function that admits a priority ordering. Suppose that the swap described below is optimal for $f$. Then any $Pm||f$ instance admits an optimal proper schedule~$\sigma$ for~$f$ with 
$$
|P(J_{h,j}(\sigma)) - P(J_{i,j}(\sigma))| \;\leq\; \const p_{\max}^2
$$
for all $1 \leq h \leq i \leq m$ and~$1 \leq j \leq n$.      
\end{restatable}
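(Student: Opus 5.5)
The plan is the usual extremal argument. Among all optimal proper schedules, pick $\sigma$ minimizing a potential such as
$$
\Phi(\sigma)=\sum_{j=1}^n \sum_{1\le h<i\le m} \bigl(P(J_{h,j}(\sigma))-P(J_{i,j}(\sigma))\bigr)^2 ,
$$
or, alternatively, the lexicographically smallest vector of prefix imbalances. Suppose, for contradiction, that some step $j$ and machines $h,i$ have $P(J_{h,j}(\sigma))-P(J_{i,j}(\sigma))>\const p_{\max}^2$. Since $f(\sigma'')\le f(\sigma)$ is guaranteed whenever the swap is admissible, $\sigma''$ is again optimal. So everything reduces to two things: showing the swap is admissible at such a violation, and showing it strictly decreases $\Phi$.

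To find the swap, I would first locate a time window where $M_h$ is persistently ahead of $M_i$. Let $j'\le j$ be the last step at which the difference of prefix loads of $M_h$ and $M_i$ was at most about $2p_{\max}^2$. Each step changes the difference by at most $p_{\max}$, so on the steps $j'+1,\dots,j$ the difference stays above $2p_{\max}^2$. Over this interval, $M_h$ receives jobs of total processing time exceeding that of $M_i$ by more than $2p_{\max}^2$. In particular, the multiset $A$ of processing times of jobs assigned to $M_h$ in steps $(j',j]$ has total at least $2p_{\max}^2$.

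For the matching set, I would look at jobs after step $j$. Let $B$ be the multiset of processing times of jobs that $M_i$ receives after $j$, up to the point where its load catches up; if $M_i$ never catches up, the tail of $M_h$ itself is used. Both $A$ and $B$ are multisets over $\{1,\dots,p_{\max}\}$ with sums at least $p_{\max}^2$. The additive-combinatorics lemma then gives nonempty $A'\subseteq A$ and $B'\subseteq B$ with equal sums, and these form the swap.

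Performing the exchange keeps both machines' completion times beyond the window unchanged and keeps each machine's list in priority order, so the result is proper. Inside the window, the prefix load difference of $M_h$ and $M_i$ shrinks by at most $\sum A'\le$ the current surplus, so it stays the same sign. Hence $\Phi$ strictly decreases, which is the contradiction.

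The main obstacle is making this exchange consistent with the precise swap defined in the paper, which I have not yet seen. In particular I need the equal-sum lemma's size bound $p_{\max}$ (elements in $\{1,\dots,p_{\max}\}$) to produce subsets whose sum is at most the available surplus $2p_{\max}^2$. This is exactly where the constant $\const$ comes from: $2p_{\max}^2$ of slack, plus the $p_{\max}^2$ required by the lemma, plus rounding. So I would first pin down the lemma's quantitative form (multisets of size at least $p_{\max}$ yield equal-sum submultisets of sum at most $p_{\max}^2$) and then choose the thresholds accordingly.

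A second point to verify is that the swap leaves prefix loads at all other steps and on all other machines unaffected, so that the potential decrease is strict rather than merely shifted elsewhere. I expect this to follow because the swap only moves equal total processing time between $M_h$ and $M_i$ inside the window.
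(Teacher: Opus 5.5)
Your plan (pick an extremal optimal schedule, then apply a strictly improving swap) is sound in outline, but two steps fail as written.

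\textbf{Admissibility.} The hypothesis only gives $f(\sigma'')\le f(\sigma)$ for the paper's specific swap, so you must use that swap. At step $j$ it requires three things: job $j$ on $M_h$, $\Delta_{h,i,j}(\sigma)\ge 4p_{\max}^2$, and $|J_i(\sigma)\setminus J_{i,j}(\sigma)|\ge 2p_{\max}$. It then exchanges equal-sum subsets $J_{H'}\subseteq J_H$ and $J_{I'}\subseteq J_I$, where $J_H$ is the last $2p_{\max}$ jobs of $J_{h,j}(\sigma)$ and $J_I$ is the first $2p_{\max}$ jobs of $J_i(\sigma)\setminus J_{i,j}(\sigma)$.

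The first condition is easy: replace $j$ by the last job of $J_{h,j}(\sigma)$, which can only enlarge the gap. You have no argument for the third condition. Your fallback ``if $M_i$ never catches up, the tail of $M_h$ itself is used'' does not define an exchange between $M_h$ and $M_i$ at all.

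The missing idea is \emph{leveling}. Some optimal proper schedule has all final loads within $p_{\max}$ of each other: move the last job of a most-loaded machine to the end of a least-loaded one, where it finishes earlier, then re-sort; regularity preserves optimality. In such a schedule, a gap above $4p_{\max}^2$ at step $j$ forces at least $2p_{\max}$ later jobs on $M_i$. Otherwise the final gap would exceed $4p_{\max}^2-2p_{\max}^2>p_{\max}$. The swap preserves final loads, hence leveledness. Conveniently, the leveling move also strictly decreases your $\Phi$, so a $\Phi$-minimizer is automatically leveled.

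\textbf{Strict decrease of $\Phi$.} Your claim that the swap leaves prefix loads at other steps unaffected is false. Jobs of $J_{H'}$ have index at most $j$, while jobs of $J_{I'}$ have index larger than $j$. So at every step $k$ from $\min J_{H'}$ to $\max J_{I'}$, including steps after $j$, an amount $s_k=P(J_{H'}\cap J_k)-P(J_{I'}\cap J_k)\ge 0$ of load moves from $M_h$ to $M_i$. If your $B$ runs until $M_i$ catches up, the gap $D_k=P(J_{h,k}(\sigma))-P(J_{i,k}(\sigma))$ at such later steps can be smaller than $s_k$. The step-$k$ term of $\Phi$ then increases.

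The fix has two parts:
\begin{itemize}
\item Use exactly the paper's $J_I$, i.e., only the first $2p_{\max}$ jobs.
\item Note that $\sum_a P(J_{a,k}(\sigma))=P(J_k)$ is fixed. Hence the step-$k$ term equals $m\sum_a P(J_{a,k}(\sigma))^2-P(J_k)^2$ and changes by exactly $2ms_k(s_k-D_k)$, so third machines play no role.
\end{itemize}
Then $s_k<D_k$ whenever $s_k>0$ follows from $\Delta_{h,i,j}(\sigma)>4p_{\max}^2$ and $P(J_H),P(J_I)\le 2p_{\max}^2$. For $k\le j$ one has $D_k\ge \Delta_{h,i,j}(\sigma)-P(J_H)+s_k$. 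For $k>j$ one has $D_k\ge \Delta_{h,i,j}(\sigma)-P(J_I)>2p_{\max}^2\ge s_k$.

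With both repairs your potential argument goes through, and it is arguably cleaner than the paper's. The paper instead takes, among leveled optimal schedules, one maximizing the first unbalanced step $j^*$ and then minimizing the gap at $j^*$. It then checks case by case, including pairs with third machines, that all steps before $j^*$ stay balanced after the swap.
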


\subsection{Additive combinatorics lemma}

Our swapping argument relies on the following lemma regarding the existence of submultisets with equal sums. A similar argument was used by Polak, Rohwedder, and W\k{e}grzycki~\cite[Lemma 2.1]{PolakEtAl21} to obtain a fast algorithm for the Knapsack problem. For a multiset of integers $A$, we write $\Sigma(A)$ for the sum of its elements, counted with multiplicity.
\begin{lemma}
\label{lem:equal-sum-set}%
Let $U\in\mathbb{N}$, and let $A,B$ be multisets whose elements lie in $\{1,2,\dots,U\}$. If $|A|,|B| \geq 2U$ then there exist non-empty $A'\subseteq A$ and $B'\subseteq B$ such that $\Sigma(A') = \Sigma(B')$.
\end{lemma}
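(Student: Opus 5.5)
We will use the classical prefix-sum walk argument, the Steinitz-type pigeonhole proof in Polak, Rohwedder, and W\k{e}grzycki~\cite[Lemma 2.1]{PolakEtAl21}. First we discard extra elements so that $|A| = |B| = 2U$; this is harmless, since submultisets of the shrunken multisets are still submultisets of the originals. We list the elements of $A$ as $a_1,\ldots,a_{2U}$ and those of $B$ as $b_1,\ldots,b_{2U}$, in arbitrary order.

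Next we build a greedy interleaved sequence of pairs $(s,t)$ with $0 \le s,t \le 2U$, starting from $(0,0)$. Write $\alpha_s = a_1+\cdots+a_s$ and $\beta_t = b_1+\cdots+b_t$. At each step, if $\alpha_s \le \beta_t$ we increment $s$; otherwise we increment $t$.

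We maintain the invariant $-U < \alpha_s - \beta_t \le U$, which gives $d := \alpha_s-\beta_t \in \{-U+1,\ldots,U\}$, a set of size $2U$.
\begin{itemize}
\item It holds at the start, where $d = 0$.
\item If $d \le 0$ we add some $a \in [1,U]$, so the new difference lies in $(-U+1, U]$.
\item If $d > 0$ we subtract some $b \in [1,U]$, so the new difference lies in $(-U, U)$.
\end{itemize}

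We must also make sure the process never tries to step past index $2U$. We stop as soon as either index would need to exceed $2U$. If $s = 2U$ and we need to increment $s$, then $\alpha_{2U} \le \beta_t$. But $\alpha_{2U} \ge 2U$, while every $t$ is at most $2U$. To get a cleaner count, we instead run the walk for exactly $2U$ steps, which is all we need. After $k \le 2U$ steps we have $s + t = k \le 2U$, so neither index exceeds $2U$ and every step is legal.

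This gives $2U+1$ states $(s_0,t_0),\ldots,(s_{2U},t_{2U})$. Their differences all lie in a set of size $2U$, so by pigeonhole two states $k<k'$ have equal difference $\alpha_{s_k}-\beta_{t_k} = \alpha_{s_{k'}}-\beta_{t_{k'}}$. Since indices only increase, $s_k \le s_{k'}$ and $t_k \le t_{k'}$, and $(s,t)$ strictly changes, so at least one index strictly grows. Setting $A' = \{a_{s_k+1},\ldots,a_{s_{k'}}\}$ and $B'=\{b_{t_k+1},\ldots,b_{t_{k'}}\}$, equality of the differences gives $\Sigma(A')=\Sigma(B')$.

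The main obstacle is ensuring that both $A'$ and $B'$ are non-empty. If, say, $A'$ were empty but $B'$ were not, we would get $\Sigma(B')=0$, which contradicts the positivity of all elements. By symmetry both submultisets are therefore non-empty.
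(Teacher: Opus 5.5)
Your proof is correct and is essentially the paper's argument: the same greedy interleaving that keeps the difference in $(-U,U]$, pigeonhole over the $2U+1$ states, subtracting the two colliding states to get $\Sigma(A')=\Sigma(B')$, and positivity of the elements to force both parts to be non-empty. The only cosmetic difference is that the paper adds $a_i$ or $b_i$ indexed by the step number rather than the next unused element, and your half-finished remark about $\alpha_{2U}\ge 2U$ is superfluous, since capping the walk at $2U$ steps already guarantees $s,t\le 2U$.
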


\begin{proof}
Let $\{a_1,..,a_{2U}\} \subseteq A$ and $\{b_1,..,b_{2U}\} \subseteq B$ be arbitrary $2U$ elements in $A$ and $B$. We construct sets $A_i \subseteq A$ and $B_i \subseteq B$ for each $i \in\{0,1,..,2U\}$ as follows:
\begin{itemize}
\item For $i=0$, we set $A_0 := B_0 := \emptyset$.
\item For $i > 0$, if $\Sigma(A_{i-1}) \leq \Sigma(B_{i-1})$ then we set $A_i := A_{i-1} \cup \{a_i\}$ and $B_i := B_{i-1}$, otherwise we set $A_i :=A_{i-1}$ and $B_i := B_{i-1} \cup \{b_i\}$.
\end{itemize}  
Observe that $-U < \Sigma(A_i) - \Sigma(B_i) \leq U$ for all $i \in\{ 0,1,..,2U\}$. By the pigeonhole principle, since there are $2U$ integers $x$ with $-U < x \leq U$ and $2U+1$ many indices~$i$, there exist $i < j$ with $\Sigma(A_i) - \Sigma(B_i) = \Sigma(A_j) - \Sigma(B_j)$. Rearranging yields $\Sigma(A_j) - \Sigma(A_i) = \Sigma(B_j) - \Sigma(B_i)$, and so $\Sigma(A_j \setminus A_i) = \Sigma(B_j \setminus B_i)$. Set $A' := A_j \setminus A_i$ and $B' := B_j \setminus B_i$. Then $\Sigma(A')=\Sigma(B')$, and since $j > i$ at least one of $A'$ and $B'$ is nonempty. As $\Sigma(A')=\Sigma(B')$, actually both $A',B' \neq \emptyset$. \qed
\end{proof}

\subsection{The swap}

Let $\sigma$ be a proper schedule for $J$. Our swap is \emph{admissible for $\sigma$ at step $j \in J$} if there exists a pair of machine indices $h,i \in \{1,\ldots,m\}$ such that the following properties hold:
\begin{itemize}
\item $j \in J_{h,j}(\sigma)$,
\item $|J_i(\sigma) \setminus J_{i,j}(\sigma)| \;\geq\; 2p_{\max}$, and 
\item $\Delta_{h,i,j}(\sigma) \;:=\; P(J_{h,j}(\sigma)) - P(J_{i,j}(\sigma)) \;\geq \; \const p^2_{\max}$.
\end{itemize}
Suppose these three conditions hold for $\sigma$. Let $t_i := P(J_{i,j}(\sigma))$ denote the time at which the last job from $J_j$ completes on~$M_i$. Since $\sigma$ is proper, every job processed on~$M_i$ after time $t_i$ belongs to $J \setminus J_j=\{j+1,\ldots,n\}$. On the other hand, machine~$M_h$ continues processing jobs from $J_j$ until time $t_h:=t_i+\Delta_{h,i,j}(\sigma)$. Let $J_I$ denote the first $2p_{\max}$ jobs in $J_i(\sigma) \setminus J_{i,j}(\sigma)$, and let $J_H$ denote the last $2p_{\max}$ jobs in $J_{h,j}(\sigma)$. 
Since $P(J_I), P(J_H) \le 2 p_{\max}^2$ and $\Delta_{h,i,j}(\sigma) \ge 4 p_{\max}^2$, all jobs in $J_I$ are completed before any job in $J_H$ is started (see Fig.~\ref{fig:Swap}). 

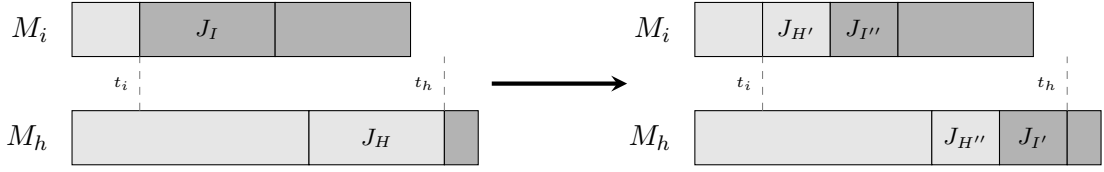
\begin{figure}[h]
\centering
\begin{tikzpicture}[x=0.9cm, y=1.2cm]
\begin{scope}[shift={(0,0)}]
\draw[fill=black!10] (0, 2.2) rectangle (1, 2.8); 
\draw[fill=black!30] (1, 2.2) rectangle (3, 2.8); 
\node[font=\small] at (2, 2.5) {$J_I$};
\draw[fill=black!30] (3, 2.2) rectangle (5, 2.8); 

\draw[fill=black!10] (0, 1) rectangle (3.5, 1.6); 
\draw[fill=black!10] (3.5, 1) rectangle (5.5, 1.6); 
\node[font=\small] at (4.5, 1.3) {$J_H$};
\draw[fill=black!30] (5.5, 1) rectangle (6, 1.6); 

\node[left] at (-0.2, 2.5) {$M_i$};
\node[left] at (-0.2, 1.3) {$M_h$};

\draw[dashed, gray] (1, 1.6) -- (1, 2.2);
\node[left, font=\tiny] at (1, 1.9) {$t_i$};
\draw[dashed, gray] (5.5, 1.6) -- (5.5, 2.2);
\node[left, font=\tiny] at (5.5, 1.9) {$t_h$};
\end{scope}

\draw[->, ultra thick, >=stealth] (6.2, 1.9) -- (8.2, 1.9); 

\begin{scope}[shift={(9.2,0)}]
\draw[fill=black!10] (0, 2.2) rectangle (1, 2.8); 
\draw[fill=black!10] (1, 2.2) rectangle (2, 2.8); 
\node[font=\small] at (1.5, 2.5) {$J_{H'}$};
\draw[fill=black!30] (2, 2.2) rectangle (3, 2.8); 
\node[font=\small] at (2.5, 2.5) {$J_{I''}$};
\draw[fill=black!30] (3, 2.2) rectangle (5, 2.8); 

\draw[fill=black!10] (0, 1) rectangle (3.5, 1.6); 
\draw[fill=black!10] (3.5, 1) rectangle (4.5, 1.6); 
\node[font=\small] at (4, 1.3) {$J_{H''}$};
\draw[fill=black!30] (4.5, 1) rectangle (5.5, 1.6); 
\node[font=\small] at (5, 1.3) {$J_{I'}$};
\draw[fill=black!30] (5.5, 1) rectangle (6, 1.6); 

\node[left] at (-0.2, 2.5) {$M_i$};
\node[left] at (-0.2, 1.3) {$M_h$};

\draw[dashed, gray] (1, 1.6) -- (1, 2.2);
\node[left, font=\tiny] at (1, 1.9) {$t_i$};
\draw[dashed, gray] (5.5, 1.6) -- (5.5, 2.2);
\node[left, font=\tiny] at (5.5, 1.9) {$t_h$};
\end{scope}
\end{tikzpicture}
\caption{A depiction of the intermediate schedule $\sigma'$ obtained from a swap performed between machines~$M_h$ and $M_i$. The lighter gray depicts jobs in $J_j=\{1,\ldots,j\}$. The darker gray depicts jobs in $J \setminus J_j$. In the final schedule $\sigma''$, the jobs of $J_{H'}$ and $J_{I''}$ can move to the left and right on $M_i$, while the jobs of $J_{I'}$ can move to the right on $M_h$.}
\label{fig:Swap}
\end{figure}

Let $H$ and $I$ denote the multisets of processing times of the jobs in $J_H$ and $J_I$, respectively. Note that the elements of $H$ and $I$ lie in $\{1,\ldots,p_{\max}\}$, and that $|H|,|I| \geq 2p_{\max}$. Therefore, by \Cref{lem:equal-sum-set} (with $A=H$, $B=I$, and $U=p_{\max}$), there exist non-empty submultisets $H'\subseteq H$ and $I'\subseteq I$ with $\Sigma(H')=\Sigma(I')$. Let~$J_{H'}$ and~$J_{I'}$ denote the corresponding sets of jobs on $M_h$ and $M_i$, respectively.

We construct an intermediate schedule $\sigma'$ by modifying only $M_h$ and $M_i$. On machine~$M_h$, we replace $J_H$ by first the jobs of $J_{H''}:= J_H \setminus J_{H'}$ in their order in $\sigma$, and then the jobs of $J_{I'}$ in their order in $\sigma$. On machine $M_i$, we replace~$J_I$ by first the jobs of $J_{H'}$ in their order in $\sigma$, and then the jobs of $J_{I''}:=J_I\setminus J_{I'}$ in their order in $\sigma$ (see Fig.~\ref{fig:Swap}). Finally, we sort the jobs on both machines by priority to obtain the proper schedule~$\sigma''$.  

\subsection{Proof of the structure theorem}

We next prove Theorem~\ref{thm:GenericStructure}. Recall that the swap is optimal for~$f$ if $f(\sigma'') \leq f(\sigma)$ for every proper schedule $\sigma$ and every schedule $\sigma''$ obtained from~$\sigma$ by an admissible swap. We call a proper schedule~$\sigma$ \emph{unbalanced at step $j$}, for $j \in J$, if $\Delta_{h,i,j}(\sigma) > \const p^2_{\max}$ for some pair of machines $M_h$ and $M_i$; otherwise, $\sigma$ is \emph{balanced at step~$j$}. To prove Theorem~\ref{thm:GenericStructure}, it suffices to show that, assuming the swap is optimal, there exists an optimal proper schedule for~$J$ that is balanced at every step $j \in J$. The final step $j=n$ is easier and, regardless of whether the swap is optimal, admits a much stronger bound. We call a proper schedule $\sigma$ \emph{leveled} if $\Delta_{h,i,n}(\sigma) \leq p_{\max}$ for all $h,i \in \{1,\ldots,m\}$.

\begin{lemma}
\label{lem:LeveledSchedules}
There exists a leveled optimal proper schedule for $J$. 
\end{lemma}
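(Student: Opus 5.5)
The plan is to take, among all optimal proper schedules, one that minimizes the potential $\Phi(\sigma):=\sum_{i=1}^m P(J_{i,n}(\sigma))^2$. Suppose it is not leveled, so $\Delta_{h,i,n}(\sigma)>p_{\max}$ for some pair $h,i$. I will then exhibit another proper schedule $\hat\sigma$ with two properties. First, no job's completion time increases; since $f$ is regular, this gives $f(\hat\sigma)\le f(\sigma)$, so $\hat\sigma$ is again optimal. Second, $\Phi(\hat\sigma)<\Phi(\sigma)$, which contradicts the choice of $\sigma$. Only machines $M_h$ and $M_i$ will change, and the total load $L_h+L_i$ on them stays fixed. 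Hence $\Phi$ decreases exactly when the larger of the two new loads is strictly below $L_h:=P(J_{h,n}(\sigma))$, where I write $L_i:=P(J_{i,n}(\sigma))$.

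The operations I use are \emph{tail exchanges}. For a threshold index $r$, exchange all jobs of index $>r$ on $M_h$ with all jobs of index $>r$ on $M_i$. Every job of index $\le r$ precedes every job of index $>r$ on each machine, so the result is automatically proper. Prefix jobs keep their completion times. A tail job that moves from machine $M_a$ to machine $M_b$ is shifted in time by exactly $P(J_{b,r})-P(J_{a,r})$. To track this I introduce $g(r):=P(J_{h,r}(\sigma))-P(J_{i,r}(\sigma))$. It satisfies $g(0)=0$ and $g(n)>p_{\max}$, and it changes by at most $p_{\max}$ from $r$ to $r+1$. If some $r$ has $g(r)=0$ and the final loads after the exchange at $r$ are less spread out, then no completion time changes and $\Phi$ strictly drops.

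The first concrete step is to handle the last job $k$ on $M_h$, i.e.\ the job of largest index on $M_h$. Take the exchange at $r=k-1$. Afterwards $M_i$ consists of its jobs of index $<k$ followed by $k$, and $M_h$ consists of its own jobs of index $<k$ followed by the set $S$ of jobs on $M_i$ with index $>k$. Let $a:=P(J_{i,k-1}(\sigma))\le L_i$. Job $k$ now completes at $a+p_k\le L_i+p_{\max}<L_h$, so it gets earlier. The jobs of $S$ are shifted by $-g(k-1)$. Both new loads, $a+p_k$ and $L_h-p_k+P(S)$, are strictly below $L_h$, because $P(S)=L_i-a<L_h-p_{\max}-a$. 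So whenever $g(k-1)\ge 0$ this single exchange is improving, and we are done.

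The main obstacle is the remaining case $g(k-1)<0$. Here the prefix of $M_h$ is lighter than that of $M_i$ just before $k$, but $M_h$ still ends heavier. In this case I will move to the largest $r<n$ with $g(r)\le 0$. Then job $r+1$ lies on $M_h$ and $0<g(r+1)\le p_{\max}$, and I look for a threshold where the exchange shifts tails only \emph{earlier}. The first attempt is to combine the exchange at $r$ with the one-job move above, applied to $r+1$ in place of $k$; the aim is to show that the net effect on the $M_i$-tail is a nonpositive shift. If this fails, I will fall back on a discrete intermediate-value argument over the indices where $g$ changes sign, to find a crossing where a tail exchange leaves all completion times unchanged. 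I expect this case analysis to be where the real difficulty lies.
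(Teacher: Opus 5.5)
\textbf{There is a genuine gap.} Your strategy only allows proper schedules in which no completion time increases, and such an improving schedule need not exist.

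Your ``easy case'' is wrong. By your own shift rule, the jobs of $S$ move from $M_i$ to $M_h$. They are shifted by $P(J_{h,k-1}(\sigma))-P(J_{i,k-1}(\sigma))=+g(k-1)$, not $-g(k-1)$, so for $g(k-1)>0$ they are \emph{delayed}. Moreover, $L_h-p_k+P(S)<L_h$ is equivalent to $P(S)<p_k$, which does not follow from $P(S)<L_h-p_{\max}-a$.

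In general, the exchange at threshold $r$ yields loads $L_i+g(r)$ on $M_h$ and $L_h-g(r)$ on $M_i$. So $\Phi$ drops iff $0<g(r)<L_h-L_i$, and exactly then every job of index $>r$ on $M_i$ is delayed by $g(r)$. Such an exchange is usable only if $M_i$ has no job of index $>r$. An exchange at a zero of $g$ just swaps the two loads, so your intermediate-value fallback can never decrease $\Phi$.

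Concretely, take $m=2$, seven unit jobs, $M_h=\{1,\dots,5\}$ and $M_i=\{6,7\}$. Then $k=5$ and $g(4)=4$, and the exchange at $r=4$ produces loads $6$ and $1$ while delaying jobs $6$ and $7$ by $4$.

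\textbf{The deeper problem.} In this example, any proper $\hat\sigma$ with $C_6(\hat\sigma)\le 1$ has job $6$ first on its machine. That forces jobs $1,\dots,5$ onto the other machine, so the maximum load stays at least $5$ and $\Phi$ cannot drop. Since your argument uses only regularity of $f$, it cannot be completed. Consider the regular objective $f=\max(0,C_6-1)+\max(0,C_7-2)$, which does not admit the index order as a priority ordering. Its optimum value $0$ is attained by a proper schedule only by this non-leveled one, up to swapping machines. So the lemma itself fails without the priority assumption.

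\textbf{The missing ingredient is exactly that assumption}, as used in the paper:
\begin{itemize}
\item Move the last job $k$ of a maximum-load machine $M_h$ to the end of a minimum-load machine $M_i$. Only $C_k$ changes, dropping from $L_h$ to $L_i+p_k<L_h$, so optimality is kept by regularity.
\item Re-sort $M_i$ by priority. This preserves optimality because $f$ admits the priority ordering, even though the jobs of index $>k$ on $M_i$ are delayed by $p_k$.
\end{itemize}
The new loads $L_h-p_k$ and $L_i+p_k$ lie strictly between $L_i$ and $L_h$. Hence your potential $\Phi$ strictly decreases under this move, which gives a clean termination argument for the paper's ``repeat'' step.
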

\begin{proof}
Let $\sigma$ be an optimal proper schedule that is not leveled, and let $M_h$ and $M_i$ be machines of maximum and minimum load in $\sigma$, respectively. Since $\sigma$ is not leveled, we have
$$
\Delta_{h,i,n}(\sigma) \;=\; P(J_h(\sigma)) - P(J_i(\sigma)) \;>\; p_{\max}.
$$ 
Let $j$ be the last job processed on $M_h$. Then $C_j(\sigma) = P(J_h(\sigma))$, and hence $P(J_i(\sigma)) \leq C_j(\sigma) - p_j$. Thus, we can move~$j$ from $M_h$ to the end of~$M_i$ without increasing its completion time. Since $f$ is regular, the resulting schedule is still optimal. Reordering the jobs on~$M_i$ according to the priority ordering restores properness without affecting optimality. Repeating this transformation yields a leveled optimal proper schedule. \qed 
\end{proof}

Next, for a contradiction, assume that no optimal proper schedule for~$J$ is balanced at every step $j \in J$. By Lemma~\ref{lem:LeveledSchedules}, there exists a leveled optimal proper schedule, so let $\Sigma \neq \emptyset$ denote the set of all such schedules. For each schedule $\sigma \in \Sigma$, let $j_{\min}(\sigma)$ be the minimum step at which $\sigma$ is unbalanced. Define
\[
j^* := \max_{\sigma \in \Sigma} j_{\min}(\sigma).
\]
Such a $j^*$ exists by our assumption that no schedule in $\Sigma$ is balanced at every step. Let $\Sigma^* \subseteq \Sigma$ denote the set of all leveled optimal proper schedules $\sigma$ with $j_{\min}(\sigma) = j^*$. For any schedule $\sigma \in \Sigma^*$, define $\delta(\sigma)=\max_{h,i \in \{1,\ldots,m\}} \Delta_{h,i,j^*}(\sigma)$, and let $\delta_{\min} = \min_{\sigma \in \Sigma^*} \delta(\sigma)$. Finally, choose a schedule $\sigma \in \Sigma^*$ with $\delta(\sigma)=\delta_{\min}$. 
We derive a contradiction by constructing a leveled optimal proper schedule $\sigma''$ such that either $j_{\min}(\sigma'') > j^*$, or $j_{\min}(\sigma'') = j^*$ and $\delta(\sigma'') < \delta_{\min}$.

Let $M_h$ and $M_i$ be a pair of machines maximizing the load gap of $\sigma$ at step $j^*$; that is, a pair of machines with $\Delta_{h,i,j^*}(\sigma) = \delta(\sigma) \geq \const p^2_{\max}$. Since $\Delta_{h',i',j}(\sigma) <  \const p^2_{\max}$ for all $h',i' \in \{1,\ldots,m\}$ and all $j < j^*$, machine~$M_h$ is the unique machine of maximum load at step~$j^*$, and $j^*$ is the last job from $J_{j^*}$ processed on $M_h$.
Moreover, we claim that $|J_i(\sigma) \setminus J_{i,j^*}(\sigma)| \geq 2p_{\max}$. Otherwise, $P(J_i(\sigma) \setminus J_{i,j^*}(\sigma)) < 2p^2_{\max}$, and~$\sigma$ would not be leveled, since
\begin{align*}
P(J_h(\sigma)) - P(J_i(\sigma)) &\;\geq\; P(J_{h,j^*}(\sigma)) - P(J_{i,j^*}(\sigma)) - P(J_i(\sigma) \setminus J_{i,j^*}(\sigma))\\   
&\;> \; \const p^2_{\max} - 2p^2_{\max} \;>\; p_{\max}.
\end{align*}
Thus, the swap is admissible at step~$j^*$ of $\sigma$. Let $\sigma''$ denote the proper schedule resulting from the swap. Then $\sigma''$ is optimal by the assumption that the swap is optimal, and moreover the completion time of~$j^*$ decreases in $\sigma''$. 
We next argue that $\sigma''$ is balanced at every step $j < j^*$, which will then yield the desired contradiction.

\begin{lemma}
\label{lem:BalancedAfterSwap}
$\sigma''$ is balanced at every step $j < j^*$.   
\end{lemma}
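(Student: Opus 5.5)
The plan is to compare the prefix loads of $\sigma''$ with those of $\sigma$ step by step. Only machines $M_h$ and $M_i$ are modified, so the loads of every other machine at every step are unchanged.

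Let $j_0$ be the smallest index among the jobs of $J_{H'}$. Every job of $J_{H'}$ lies in $J_{j^*}$. Every job of $J_{I'}$ lies in $J\setminus J_{j^*}$, so it has index greater than $j^*$.

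\textbf{Steps below $j_0$.} For $j<j_0$, no job moved by the swap belongs to $J_j$. Hence $J_{k,j}(\sigma'')=J_{k,j}(\sigma)$ for every machine $M_k$, and $\sigma''$ inherits the balance of $\sigma$ at step $j$.

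\textbf{Steps $j$ with $j_0\le j<j^*$.} At such a step, the prefix $J_j$ contains some jobs of $J_{H'}$ and no jobs of $J_{I'}$. Let $x_j:=P(J_{H'}\cap J_j)$. Then:
\begin{itemize}
\item $P(J_{h,j}(\sigma''))=P(J_{h,j}(\sigma))-x_j$;
\item $P(J_{i,j}(\sigma''))=P(J_{i,j}(\sigma))+x_j$;
\item all other loads are unchanged.
\end{itemize}
Here $0\le x_j\le \Sigma(H')\le 2p_{\max}^2$.

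\textbf{Lower bound on the loads of $M_h$ and $M_i$.} The key point is that these jobs lie in $J_H$, which sits at the very end of $M_h$'s prefix up to step $j^*$. Thus $j\ge j_0$ forces the load of $M_h$ at step $j$ in $\sigma$ to be large:
\[
P(J_{h,j}(\sigma))\ \ge\ t_h-P(J_H)\ \ge\ t_h-2p_{\max}^2 .
\]
Combining this with $x_j\le 2p_{\max}^2$ gives
\[
P(J_{h,j}(\sigma''))\ \ge\ t_h-\const p_{\max}^2 .
\]
Since $t_h=t_i+\delta(\sigma)\ge t_i+\const p_{\max}^2$, the new load of $M_h$ is at least $t_i$. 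The new load of $M_i$ at step $j$ is also at least $t_i$, because the old load of $M_i$ at step $j<j^*$ is at most $t_i$ and it only grows by $x_j$.

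\textbf{Upper bounds.} Loads of $M_h$ and $M_i$ only decreased and increased respectively, so I would check both directions.
\begin{itemize}
\item The new load of $M_h$ is at most its old load. The new load of $M_i$ is at most $P(J_{i,j}(\sigma))+x_j$, and since $x_j\le P(J_{h,j}(\sigma))-(t_h-P(J_H))$, this is at most $t_i+2p_{\max}^2\le t_h$, which would follow from $t_i+2p_{\max}^2\le t_h$. Both new values are therefore at most $\max(P(J_{h,j}(\sigma)),\ldots)$. More precisely, each lies between $\min$ and $\max$ of the old loads of $M_h$ and $M_i$ at step $j$. This is because the two loads move toward each other and cannot cross: $P(J_{h,j}(\sigma))-x_j\ge t_h-\const p_{\max}^2\ge t_i\ge P(J_{i,j}(\sigma))$.
\item Symmetrically, $P(J_{i,j}(\sigma))+x_j\le t_i+2p_{\max}^2< t_h-2p_{\max}^2\le P(J_{h,j}(\sigma))$.
\end{itemize}

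\textbf{Conclusion.} At each step $j<j^*$, the multiset of loads in $\sigma''$ arises from that of $\sigma$ by moving two values strictly within the interval they span. So the maximum pairwise gap does not increase, and balance at step $j$ is inherited from $\sigma$, whose gaps at steps $j<j^*$ are all below $\const p_{\max}^2$.

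\textbf{Main obstacle.} The delicate step is the non-crossing inequality, i.e.\ bounding $P(J_{h,j}(\sigma))$ from below for $j\ge j_0$. This needs $J_H$ to be the final $2p_{\max}$ jobs of $J_{h,j^*}(\sigma)$, together with $P(J_H)\le 2p_{\max}^2$ and the constant $\const$ being large enough.

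A second point to check is that the reordering from $\sigma'$ to $\sigma''$ by priority does not matter. It does not, because prefix loads $P(J_{k,j})$ depend only on the assignment of jobs to machines, not on their order.
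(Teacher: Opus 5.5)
Your proof is correct and follows the paper's argument. It relies on the same inequalities: the bound $P(J_{h,j}(\sigma))\ge t_h-P(J_H)$ for $j\ge j_0$, which holds because $J_H$ sits at the end of $M_h$'s prefix, gives $P(J_{h,j}(\sigma''))\ge P(J_{i,j}(\sigma))$ and $P(J_{i,j}(\sigma''))\le t_i+2p_{\max}^2\le P(J_{h,j}(\sigma))$. You only repackage these as ``both modified loads stay in $[P(J_{i,j}(\sigma)),P(J_{h,j}(\sigma))]$, so $\max-\min$ cannot grow'', in place of the paper's six-case analysis over pairs $(a,b)$.

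One stray remark should be deleted. The claim that the new load of $M_i$ at step $j$ is at least $t_i$ does not follow from ``at most $t_i$, then grows by $x_j$''. It is also false in general, since jobs of $M_i$ with index in $(j,j^*]$ are missing from the prefix. Your conclusion never uses it.
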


\begin{proof}
Consider some $j < j^*$, and let $j_0 < j^*$ be the first job, in the priority ordering, that belongs to~$J_{H'}$. Observe that in~$\sigma''$, the position of every job with index smaller than $j_0$ is the same as in $\sigma$, and hence $\sigma''$ is balanced at every step $j < j_0$ by the choice of~$\sigma$. It therefore remains to show that for every $j \in \{j_0,\ldots,j^*-1\}$ and every pair $a,b \in \{1,\ldots,m\}$, we have $\Delta_{a,b,j}(\sigma'') < \const p^2_{\max}$. So fix some $j \in \{j_0,\ldots,j^*-1\}$ and consider all possibilities for $a$ and $b$. Since only the jobs on machines $M_h$ and $M_i$ change position, it suffices to consider the cases where $a \neq b$ and $\{a,b\} \cap \{h,i\} \neq \emptyset$.

First suppose that $a=h$ and $b=i$. Among jobs in $J_j$, some jobs are moved from machine $M_h$ to machine $M_i$, but no jobs are moved from machine $M_i$ to machine $M_h$. Thus, $P(J_{h,j}(\sigma'')) \le P(J_{h,j}(\sigma))$ and $P(J_{i,j}(\sigma'')) \ge P(J_{i,j}(\sigma))$. Consequently, 
\begin{align*}
\Delta_{a,b,j}(\sigma'') = P(J_{h,j}(\sigma'')) - P(J_{i,j}(\sigma''))
\le  P(J_{h,j}(\sigma))- P(J_{i,j}(\sigma))
= \Delta_{h,i,j}(\sigma) \;<\; \const p^2_{\max}.
\end{align*}

Next suppose that $a=i$ and $b=h$. 
Since the jobs in $J_j$ that are moved from machine $M_h$ to machine $M_i$ are a subset of $J_H$, and since $j \le j^*$, we have $P(J_{i,j}(\sigma'')) \le P(J_{i,j}(\sigma)) + P(J_H) \le P(J_{i,j^*}(\sigma)) + P(J_H)$. 
Since $j \ge j_0 \in J_H$, jobs in $J_{h,j^*}(\sigma) \setminus J_H$ stay on machine~$M_h$ and thus $P(J_{h,j}(\sigma'')) \ge P(J_{h,j^*}(\sigma)) - P(J_H)$. Combining this with $P(J_H) \le 2 p_{\max}^2$ and $P(J_{h,j^*}(\sigma)) - P(J_{i,j^*}(\sigma)) = \Delta_{h,i,j^*}(\sigma) \ge 4 p_{\max}^2$ yields
$\Delta_{i,h,j}(\sigma'') = P(J_{i,j}(\sigma'')) - P(J_{h,j}(\sigma'')) \le 0 < 4 p_{\max}^2$.


Now suppose that $a=h$ and $b \in \{1\ldots,m\} \setminus \{h,i\}$. Note that the completion time of any job in~$J_j$ on $M_h$ does not increase in $\sigma''$ compared to $\sigma$, while the position of any job on $M_b$ remains the same as in~$\sigma$. Thus, $P(J_{h,j}(\sigma'')) \leq P(J_{h,j}(\sigma))$ and $P(J_{b,j}(\sigma))=P(J_{b,j}(\sigma''))$, and therefore
\begin{align*}
\Delta_{a,b,j}(\sigma'') \;&=\; P(J_{h,j}(\sigma'')) - P(J_{b,j}(\sigma''))\\ 
\;&\leq\;  P(J_{h,j}(\sigma))- P(J_{b,j}(\sigma))\\ 
\;&=\; \Delta_{h,b,j}(\sigma) \;<\; \const p^2_{\max}.
\end{align*}

Suppose next that $a=i$ and $b \in \{1\ldots,m\} \setminus \{h,i\}$. Since $j < j^*$, every job in~$J_I$ has index larger than~$j$, and therefore the swap can only increase the load of~$M_i$ at step~$j$ by jobs from~$J_{H'}$. Thus,
\[
P(J_{i,j}(\sigma'')) \;\leq\; P(J_{i,j^*}(\sigma)) + P(J_{H'}) \;=\; P(J_{i,j^*}(\sigma)) + P(J_{I'}) \;\leq\; P(J_{i,j^*}(\sigma)) + P(J_I).
\]
On the other hand, by the definition of~$J_H$, every job of~$J_H$, and in particular $j_0$, starts on~$M_h$ in~$\sigma$ no earlier than time~$P(J_{i,j^*}(\sigma))+P(J_I)$. Since $j \geq j_0$, this implies that
\(
P(J_{i,j}(\sigma'')) \;<\; P(J_{h,j}(\sigma)).
\)
Since machine~$M_b$ is unchanged, we have $P(J_{b,j}(\sigma'')) = P(J_{b,j}(\sigma))$. Therefore,
\begin{align*}
\Delta_{a,b,j}(\sigma'') \;=\; P(J_{i,j}(\sigma'')) - P(J_{b,j}(\sigma'')) 
\;<\; P(J_{h,j}(\sigma)) - P(J_{b,j}(\sigma)) 
\;=\; \Delta_{h,b,j}(\sigma) \;<\; \const p^2_{\max},
\end{align*}

Suppose now that $a\in \{1\ldots,m\} \setminus \{h,i\}$ and $b=h$. Since $j \geq j_0$, the set $J_{H'} \cap J_j$ contains~$j_0$. By the definition of~$J_H$, every job of~$J_H$, and in particular~$j_0$, starts on~$M_h$ in~$\sigma$ no earlier than time~$P(J_{i,j^*}(\sigma))+P(J_I)$. Therefore, after removing the jobs of~$J_{H'} \cap J_j$ from~$M_h$, the load on~$M_h$ at step~$j$ in~$\sigma''$ is still at least~$P(J_{i,j^*}(\sigma))+P(J_I)$. Since $P(J_I)>0$ and $j<j^*$, it follows that
\[
P(J_{h,j}(\sigma'')) \;\geq\; P(J_{i,j^*}(\sigma))+P(J_I) \;>\; P(J_{i,j^*}(\sigma)) \;\geq\; P(J_{i,j}(\sigma)).
\]
As machine~$M_a$ is unchanged, we have $P(J_{a,j}(\sigma'')) = P(J_{a,j}(\sigma))$. Hence
\begin{align*}
\Delta_{a,b,j}(\sigma'') \;=\; P(J_{a,j}(\sigma'')) - P(J_{h,j}(\sigma'')) 
\;<\; P(J_{a,j}(\sigma)) - P(J_{i,j}(\sigma)) 
\;=\; \Delta_{a,i,j}(\sigma) \;<\; \const p^2_{\max},
\end{align*}

Finally suppose that $a\in \{1\ldots,m\} \setminus \{h,i\}$ and $b=i$. Then as the swap only adds jobs from $J_{j^*}$ to machine~$M_i$, and none of these jobs is removed from this machine, we have $P(J_{i,j}(\sigma'')) > P(J_{i,j}(\sigma))$. Accordingly, 
$$
\Delta_{a,i,j}(\sigma'') \;<\; \Delta_{a,i,j}(\sigma) \;<\; \const p^2_{\max}.
$$
The lemma thus follows. \qed
\end{proof}

According to Lemma~\ref{lem:BalancedAfterSwap}, the new schedule~$\sigma''$ can only be unbalanced at a step $j \geq j^*$, and so $j_{\min}(\sigma'') \geq j^*$. As $j^*$ is the last job scheduled on $M_h$ in $\sigma$, and its completion time decreases in $\sigma''$, we have $\Delta_{h,i,j^*}(\sigma'') < \Delta_{h,i,j^*}(\sigma)$. Moreover, since $M_h$ is the unique machine of maximum load in $\sigma$ at step~$j^*$, we also have $\Delta_{h,i,j^*}(\sigma'') < \Delta_{h,a,j^*}(\sigma)$ for all $1 \leq a \leq m$. As $\Delta_{h,i,j^*}=\delta(\sigma)$, this implies that either $j_{\min}(\sigma'') > j^*$, or $j_{\min}(\sigma'') = j^*$ but $\delta(\sigma'') < \delta(\sigma) = \delta_{\min}$. The former case is a contradiction to our choice of~$j^*$, while the latter is a contradiction to our choice of~$\delta_{\min}$. This completes the proof of Theorem~\ref{thm:GenericStructure}.

\subsection{A useful corollary}

For our algorithms, it will be useful to derive a bound on the difference between the load of any machine $M_i$ at step~$j$, and the average load of all jobs in $J_j$.

\begin{corollary}
\label{cor:BoundedDeviation}%
With the same preconditions as in Theorem~\ref{thm:GenericStructure}, there exists an optimal proper schedule $\sigma$ with
$$
\Big|P(J_{i,j}(\sigma)) - \frac1m P(J_j)\Big| \;<\; \const p_{\max}^2
$$ 
for each $1 \leq i \leq m$ and~$1 \leq j  \leq n$.    
\end{corollary}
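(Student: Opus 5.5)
The corollary follows from Theorem~\ref{thm:GenericStructure} by an averaging argument. Take the optimal proper schedule $\sigma$ given by Theorem~\ref{thm:GenericStructure}; we show that this same $\sigma$ satisfies the claimed deviation bound.

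Fix a step $j$ and a machine $M_i$, and write $L_a := P(J_{a,j}(\sigma))$ for $a \in \{1,\ldots,m\}$. Since the sets $J_{1,j}(\sigma),\ldots,J_{m,j}(\sigma)$ partition $J_j$, we have $\sum_{a=1}^m L_a = P(J_j)$. Hence
\[
L_i - \frac{1}{m}P(J_j) \;=\; \frac{1}{m}\sum_{a=1}^{m} (L_i - L_a) \;=\; \frac{1}{m}\sum_{a\neq i} (L_i - L_a).
\]
The term with $a=i$ vanishes, which is the source of the slack. By Theorem~\ref{thm:GenericStructure}, each $|L_i - L_a| \le 4p_{\max}^2$. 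The triangle inequality then gives
\[
\Big|L_i - \tfrac{1}{m}P(J_j)\Big| \;\le\; \frac{m-1}{m}\cdot 4p_{\max}^2 \;<\; 4p_{\max}^2 .
\]
The final inequality is strict because $p_{\max}\ge 1$, so $4p_{\max}^2>0$, and $(m-1)/m<1$.

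There is no real obstacle. The only points to check are that the theorem's bound covers every ordered pair of machines (it is stated for $h\le i$, but the absolute value makes it symmetric), and that the strictness of the inequality comes from dropping the zero term $a=i$ in the average.
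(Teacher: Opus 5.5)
Your proposal is correct and follows essentially the same route as the paper's proof. Both take the schedule from Theorem~\ref{thm:GenericStructure}, write $P(J_{i,j}(\sigma)) - \frac1m P(J_j)$ as the average $\frac1m\sum_{a}(L_i-L_a)$, and use the vanishing $a=i$ term to get the factor $\frac{m-1}{m}$. That factor, together with $p_{\max}\ge 1$, gives the strict bound $<4p_{\max}^2$.
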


\begin{proof}
Using Theorem~\ref{thm:GenericStructure}, we have for each $1 \leq i \leq m$ and~$1 \leq j  \leq n$ that
\begin{align*}
\Big|P(J_{i,j}(\sigma)) - \frac1m P(J_j)\Big| &\;=\; 
\Big|P(J_{i,j}(\sigma)) - \frac1m \sum^m_{h=1}P(J_{h,j}(\sigma))\Big|\\
&\;=\;\Big|\frac1m \sum^m_{h=1} \Big(P(J_{i,j}(\sigma)) -  P(J_{h,j}(\sigma))\Big)\Big| \\
&\;\leq\; \frac{m-1}{m} \max_{1 \leq h \leq m} \Big| P(J_{i,j}(\sigma)) - P(J_{h,j}(\sigma)) \Big| \;<\; \const p^2_{\max}.
\end{align*}
The claim thus follows. \qed
\end{proof}

\section{The \boldmath{$Pm||\sum w_jC_j$} problem}

Below we apply our generic structure theorem to speed up the Lawler-Moore dynamic programming algorithm for the $Pm|| \sum w_jC_j$ problem. Recall that, by Smith's Rule~\cite{Smith1956}, the priority ordering in this problem is given by non-increasing efficiencies, and so we assume that $J$ is ordered such that $e_1 \geq e_2 \geq \cdots \geq e_n$. We first show that the swap described in Section~\ref{sec:Swap} is optimal; this will then allow us to use Theorem~\ref{thm:GenericStructure} to prune the state space of the Lawler-Moore dynamic program.

\subsection{Optimality of the swap}

Let $\sigma$ be any proper schedule for $J$. Recall that our swap produces an intermediate schedule~$\sigma'$ that is not necessarily proper, and then transforms $\sigma'$ into a proper schedule $\sigma''$ with no larger total weighted completion time. Thus, to argue that our swap is always optimal, it is enough to show that $\sum w_jC_j(\sigma') \leq \sum w_jC_j(\sigma)$. Below we prove this by first introducing a general lemma that will be useful later on. For brevity, we write $C_j=C_j(\sigma)$ and $C'_j=C_j(\sigma')$ for all $j \in J$.

\begin{lemma}
\label{lem:Swap-Cjs}
Let $\sigma'$ be the intermediate schedule obtained from~$\sigma$ by a swap at step~$j^* \in J$ between machines $M_h$ and $M_i$. Then the following hold:
\begin{enumerate}
\item $C'_j=C_j$ for all $j\notin J_H\cup J_I$, 
\item $C'_j \leq C_j$ for all $j\in J_H$, and
\item $C'_j \geq C_j$ for all $j\in J_{I}$.
\end{enumerate}
\end{lemma}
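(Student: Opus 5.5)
We argue directly from the construction of $\sigma'$ in Section~\ref{sec:Swap}. The key observation is that the swap is \emph{load-preserving on each modified segment*}: on $M_h$ the block $J_H$, which occupies the time interval $[t_h-P(J_H),t_h)$, is replaced by $J_{H''}$ followed by $J_{I'}$. Since $\Sigma(H')=\Sigma(I')$, this replacement has total length $P(J_H)-P(J_{H'})+P(J_{I'})=P(J_H)$. Symmetrically, on $M_i$ the block $J_I$, occupying $[t_i,t_i+P(J_I))$, is replaced by $J_{H'}$ followed by $J_{I''}$, again of total length $P(J_I)$. So in $\sigma'$ both modified blocks start and end at exactly the same times as the blocks they replace.

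\textbf{Item 1.} A job $j\notin J_H\cup J_I$ either lies on a machine other than $M_h,M_i$, where nothing changes, or lies on $M_h$ or $M_i$ strictly before or strictly after the modified block. Jobs before the block keep their start times trivially. Jobs after the block are preceded by exactly the same total processing time, by the equal-length observation. Hence $C'_j=C_j$.

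\textbf{Item 2.} Here we must compare completion times of jobs of $J_H$, which may now sit on either machine.
\begin{itemize}
\item For $j\in J_{H''}$: $j$ stays on $M_h$ and keeps its relative order. The jobs of $J_H$ preceding it in $\sigma'$ form a subset of those preceding it in $\sigma$, since jobs of $J_{H'}$ were removed. So $C'_j\le C_j$.
\item For $j\in J_{H'}$: $j$ moves to $M_i$, where it is processed inside the first $P(J_{H'})$ time units after $t_i$. Thus $C'_j\le t_i+P(J_{H'})\le t_i+P(J_I)\le t_i+2p_{\max}^2$. In $\sigma$, every job of $J_H$ starts no earlier than $t_h-P(J_H)\ge t_i+4p_{\max}^2-2p_{\max}^2$, which is the fact already noted when the swap was defined. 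Hence $C'_j\le t_i+2p_{\max}^2\le C_j$.
\end{itemize}
This second sub-case is the only place where the gap $\Delta_{h,i,j^*}(\sigma)\ge 4p_{\max}^2$ is used, and it is the main point to verify carefully.

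\textbf{Item 3.} This is symmetric.
\begin{itemize}
\item For $j\in J_{I''}$: $j$ stays on $M_i$ in the same relative order, now preceded additionally by all of $J_{H'}$ but no longer by the jobs of $J_{I'}$ that preceded it. Since the block ends at the same time $t_i+P(J_I)$, the cleanest argument is to count from the end. The jobs of $J_{I''}$ after $j$ in $\sigma'$ are a subset of the jobs of $J_I$ after $j$ in $\sigma$. So $C'_j=t_i+P(J_I)-(\text{load after }j\text{ in the block})\ge C_j$.
\item For $j\in J_{I'}$: $j$ moves to $M_h$ into the final $P(J_{I'})$ time units before $t_h$. Therefore $C'_j\ge t_h-P(J_{I'})\ge t_i+4p_{\max}^2-2p_{\max}^2$, while in $\sigma$ we have $C_j\le t_i+P(J_I)\le t_i+2p_{\max}^2$. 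Hence $C'_j\ge C_j$.
\end{itemize}
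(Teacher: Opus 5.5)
Your proof is correct and uses essentially the same case analysis as the paper: you split into $J_{H'}$ versus $J_{H''}$ and $J_{I'}$ versus $J_{I''}$, using that all of $J_I$ completes before any job of $J_H$ starts. The only cosmetic differences are that you state the equal-block-length observation explicitly for item 1 (the paper leaves it implicit), and that you handle $J_{I''}$ by counting back from the common block end $t_i+P(J_I)$ rather than forward from $t_i$.
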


\begin{proof}
We argue each item of the lemma separately:
\begin{enumerate}
\item Let $j\notin J_H\cup J_I$. Only jobs in $J_H \cup J_I$ are moved by the swap, and so $C'_j=C_j$.
\item Let $j\in J_H$. If $j\in J_{H'}$, then in $\sigma$ it starts processing no earlier than time $t_i+P(J_I)$, while in $\sigma'$ it completes no later than that time. Hence~$C'_j<C_j$ in this case. If $j\in J_{H''}$, then the jobs $j' \in J_{H'} \neq \emptyset$ with $j' <j$ are now no longer processed before $j$ on machine $M_h$ in~$\sigma'$, so $C'_j \leq C_j$. 
\item Let $j \in J_{I}$. If $j \in J_{I'}$, then in~$\sigma$ it completes no later than time $t_i+P(J_I)$, while in~$\sigma'$ it is processed on machine~$M_h$ after all jobs of $J_{H''}$, and every job of $J_{H''}$ starts no earlier than time $t_i+P(J_I)$. Thus, job~$j$ starts in $\sigma'$ no earlier than time $t_i+P(J_I)$, and $C_j'>C_j$ in this case. If $j \in J_{I''}$, then in $\sigma$ it completes no later than time $t_i+P(J_{I'})+P(J^{\leq}_{I''})$, where $J^{\leq}_{I''}$ is the set of all jobs $j' \in J_{I''}$ with $j' \leq j$, whereas in $\sigma'$ it completes exactly at time  $t_i+P(J_{H'})+P(J^{\leq}_{I''})=t_i+P(J_{I'})+P(J^{\leq}_{I''})$. Hence, $C_j'\geq C_j$ in this case. \qed
\end{enumerate}
\end{proof}

\begin{lemma}
\label{lem:wjCj-SwapOpt}
$\sum w_j C'_j \;\leq\; \sum w_j C_j$.
\end{lemma}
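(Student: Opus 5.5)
The plan is to compare $\sigma'$ and $\sigma$ job by job via Lemma~\ref{lem:Swap-Cjs}, and then use a threshold efficiency to reduce the weighted comparison to a processing-time-weighted one. Since jobs outside $J_H\cup J_I$ keep their completion times (item~1), we have
$$\sum_j w_j C'_j - \sum_j w_j C_j \;=\; \sum_{j\in J_H} w_j (C'_j - C_j) + \sum_{j\in J_I} w_j (C'_j - C_j).$$
By items~2 and~3, the first sum is a gain and the second a loss. The key structural fact is that $J_H\subseteq J_{j^*}$, while $J_I\subseteq J\setminus J_{j^*}$. Since jobs are sorted by non-increasing efficiency, every $j\in J_H$ has $e_j \ge e^* := e_{j^*}$, and every $j\in J_I$ has $e_j\le e^*$.

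Writing $w_j = e_j p_j$, I will bound both sums in terms of $e^*$. For $j\in J_H$ we have $C'_j-C_j\le 0$ and $e_j\ge e^*$, so $w_j(C'_j-C_j)\le e^* p_j(C'_j-C_j)$. For $j\in J_I$ we have $C'_j-C_j\ge 0$ and $e_j\le e^*$, so again $w_j(C'_j-C_j)\le e^* p_j(C'_j-C_j)$. Hence the difference is at most
$$e^*\sum_{j\in J_H\cup J_I} p_j\,(C'_j - C_j),$$
and it suffices to show that this $p$-weighted sum is $0$ (or at most $0$).

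For that step I will use the order-independence of $\sum p_k C_k$ on a contiguous block. If jobs with total processing time $Q$ are processed consecutively starting at time $s$, then in any order
$$\sum_k p_k C_k = sQ + \tfrac12\Big(Q^2+\sum_k p_k^2\Big).$$
In $\sigma$, $J_H$ is a contiguous block on $M_h$ starting at $s_h:=t_h-P(J_H)$, and $J_I$ is a contiguous block on $M_i$ starting at $t_i$. In $\sigma'$, the block $J_{H''}\cup J_{I'}$ occupies the same interval on $M_h$, because $P(J_{I'})=P(J_{H'})$ gives total $P(J_H)$. Likewise $J_{H'}\cup J_{I''}$ occupies $M_i$ starting at $t_i$ with total $P(J_I)$. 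Summing the formula over the two blocks in each schedule gives
$$s_h P(J_H) + t_i P(J_I) + \tfrac12\big(P(J_H)^2+P(J_I)^2\big) + \tfrac12\sum_{j\in J_H\cup J_I} p_j^2$$
in both cases, since the multiset of jobs involved is the same. Hence the $p$-weighted difference is exactly $0$, and the lemma follows.

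The main obstacle is recognizing this exchange of the efficiency threshold, which converts the weighted objective into the $p$-weighted one, where the equal-sum property makes the cascade cancel exactly. Beyond that, the only points to check carefully are the sign bookkeeping and that the blocks in $\sigma'$ really start at the same times as in $\sigma$.
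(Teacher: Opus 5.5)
Your proof is correct and follows the paper's argument. The paper likewise fixes the threshold $e=e_{j^*}$, writes $w_j = e p_j + (e_j-e)p_j$, shows that $\sum_j p_j C_j$ is unchanged by the swap, and concludes from the sign pattern given by Lemma~\ref{lem:Swap-Cjs}; the only difference is that you verify the invariance of the $p$-weighted sum locally on the two swapped blocks (which occupy the same time intervals in $\sigma$ and $\sigma'$), while the paper computes it over whole machines using that the final machine loads are unchanged.
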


\begin{proof}
Set $e:=e_{j^*}$, and define
$$
\delta_j\;:=\;w_j-ep_j\;=\;(e_j-e)p_j
$$
for all $j \in J$. For each $i \in \{1,\ldots,m\}$, let $J_i(\sigma)=\{j_{i,1},\ldots,j_{i,n(i)}\}$. Then as the final machine loads do not change from $\sigma$ to $\sigma'$, \emph{i.e.} $P(J_i(\sigma))=P(J_i(\sigma'))$ for all $i \in \{1,\ldots,m\}$, we have  
\begin{align*}
\sum_{j=1}^n p_j C_j \;&=\; \sum^m_{i=1}\sum_{j \in J_i(\sigma)} p_j C_j  \;=\; \sum^m_{i=1}\sum^{n(i)}_{k=1} p_{i,j_k} \sum^{k}_{\ell=1} p_{i,j_\ell}\\ 
\;&=\; \sum^m_{i=1}\frac{1}{2}\left(\left( \sum_{k=1}^{n(i)} p_{i,j_k} \right)^2 + \sum_{k=1}^{n(i)} p_{i,j_k}^2\right) \\
\;&=\; \sum^m_{i=1}\frac{1}{2}\left(P(J_i(\sigma))^2 + \sum_{k=1}^{n(i)} p_{i,j_k}^2\right) 
\;=\; \sum^m_{i=1}\frac{1}{2}\left(P(J_i(\sigma'))^2 + \sum_{k=1}^{n(i)} p_{i,j_k}^2\right) \;=\; \sum_{j=1}^n p_j C'_j,
\end{align*}
where the last equality follows from the same reasoning.
Therefore,
$$
\sum_{j=1}^n w_j(C'_j-C_j)
= e\sum_{j=1}^n p_j(C_j'-C_j)+\sum_{j=1}^n \delta_j(C_j'-C_j)
= \sum_{j=1}^n \delta_j(C_j'-C_j).
$$
Observe that by Lemma~\ref{lem:Swap-Cjs}, if $j\notin J_{H} \cup J_{I}$ then $C'_j=C_j$,  and so the contribution of~$j$ is zero to this last sum. If $j\in J_H$, then $j\le j^*$ and so $\delta_j\ge0$. By Lemma~\ref{lem:Swap-Cjs} we have $C'_j\leq C_j$, and therefore $\delta_j(C'_j -C_j) \leq 0$. If $j\in J_{I}$, then $j>j^*$ and hence $\delta_j\leq 0$. By Lemma~\ref{lem:Swap-Cjs} we have $C_j'\geq C_j$, and so $\delta_j(C'_j-C_j) \leq 0$. Consequently, every term in $\sum_j \delta_j(C_j'-C_j)$ is non-positive, and the lemma follows. \qed
\end{proof}

\subsection{Lawler-Moore speedup}

According to Corollary~\ref{cor:BoundedDeviation} of the theorem, we can restrict our attention to proper schedules $\sigma$ with 
$$
-\const p^2_{\max} \;\leq\; P(J_{i,j}(\sigma)) -  P(J_j)/m   \;\leq\; \const p^2_{\max}
$$ 
for all $1\leq i \leq m$ and $1 \leq j \leq n$. We can now apply the Lawler-Moore dynamic program, where the state space is restricted to $P_1,\ldots,P_{m-1} \in \{-4p^2_{\max},\ldots,4p^2_{\max}\}$. In our pruned version, a state $T_j[P_1,\ldots,P_{m-1}]$ is defined as the minimum total weighted completion time of any schedule for~$J_j$, where $P(J_j)/m+P_i$ represents the total processing time (load) assigned to machine $M_i$. This yields the following modified Lawler-Moore recursion:
$$
T_j[P_1,\ldots,P_{m-1}] = \min
\begin{cases}
\quad T_{j-1}[P'_1 -p_j,\ldots,P'_{m-1}] \,+\, w_j \cdot (P(J_j)/m+P_1),\\ 
\quad \vdots\\
\quad T_{j-1}[P'_1,\ldots,P'_{m-1} -p_j] +w_j \cdot (P(J_j)/m +P_{m-1}),\\
\quad T_{j-1}[P'_1,\ldots,P'_{m-1}] + w_j \cdot (P(J_j)/m- \sum^{m-1}_{i=1} P_{i}), \\
\end{cases}
$$
where $P'_i= P_i+p_j/m$ for all $1 \leq i \leq m$. Naturally, we set $T_j[P_1,\ldots,P_{m-1}]=\infty$ whenever some $P_i \notin \{-4p^2_{\max},\ldots,4p^2_{\max}\}$. Correctness of this recursion follows from the correctness of the original Lawler-Moore recurrence and Theorem~\ref{thm:GenericStructure}. The number of states in the program is $O(p_{\max}^{2m-2} \cdot n)$, and since computing each state requires $O(m)=O(1)$ time, the running time of the resulting dynamic program is $O(p_{\max}^{2m-2} \cdot n)$.

We next use the processing-time/weight duality of Hall and Chudak, as described by Goemans and Williamson~\cite[pp.~283--284]{GoemansWilliamson00}. Their observation is stated for a single machine; applying the same transformation independently on each machine gives the following lemma.

\begin{lemma}
\label{lem:WCT-duality}%
For an instance $\mathcal I$ of $Pm||\sum w_jC_j$, let $\mathcal I^\dagger$ be obtained by setting $p_j^\dagger=w_j$ and $w_j^\dagger=p_j$ for every job~$j$. Reversing the job sequence on each machine defines a value-preserving bijection between non-idling schedules for $\mathcal I$ and $\mathcal I^\dagger$. In particular, $\operatorname{OPT}(\mathcal I)=\operatorname{OPT}(\mathcal I^\dagger)$, and the bijection maps Smith-ordered schedules to Smith-ordered schedules.
\end{lemma}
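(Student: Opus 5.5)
We will prove the lemma machine by machine, since a schedule for $Pm||\sum w_jC_j$ is a partition of $J$ together with a sequence on each machine, and the objective is the sum of the per-machine contributions. Fix a machine whose non-idling sequence is $j_1,\ldots,j_k$ in $\mathcal I$. Its contribution is
\[
\sum_{a=1}^k w_{j_a} \sum_{b\le a} p_{j_b} \;=\; \sum_{b\le a} w_{j_a}p_{j_b}.
\]
In $\mathcal I^\dagger$ we take the reversed sequence $j_k,\ldots,j_1$. Job $j_a$ then completes at time $\sum_{b\ge a} p^\dagger_{j_b}=\sum_{b\ge a} w_{j_b}$, and its weight is $w^\dagger_{j_a}=p_{j_a}$. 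The contribution in $\mathcal I^\dagger$ is therefore
\[
\sum_{a} p_{j_a}\sum_{b\ge a} w_{j_b} \;=\; \sum_{a\le b} p_{j_a}w_{j_b}.
\]
After renaming indices, this is the same double sum over ordered pairs in which the $w$-job comes no earlier than the $p$-job; the diagonal terms $w_{j_a}p_{j_a}$ appear in both. Summing over machines shows that the map preserves the objective value.

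Next we check that the map is a bijection. Keep the same assignment of jobs to machines and reverse each machine's sequence. This is an involution between non-idling schedules of $\mathcal I$ and of $\mathcal I^\dagger$, since $(\mathcal I^\dagger)^\dagger=\mathcal I$. One point needs care: non-idling schedules are determined by the assignment together with the per-machine orders. Restricting to non-idling schedules is harmless, because the objective is regular, so for both instances $\operatorname{OPT}$ is attained by a non-idling schedule. Combined with value preservation, this gives $\operatorname{OPT}(\mathcal I)=\operatorname{OPT}(\mathcal I^\dagger)$.

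Finally, we handle Smith ordering. The efficiency in $\mathcal I^\dagger$ is $e_j^\dagger = p_j/w_j = 1/e_j$, and since all $p_j,w_j>0$ the map $x\mapsto 1/x$ is order-reversing on positive reals. A sequence that is non-increasing in $e$ therefore becomes non-decreasing in $e^\dagger$. After reversal it is non-increasing in $e^\dagger$, so it is Smith-ordered for $\mathcal I^\dagger$.

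The only potentially delicate step is the double-sum identity, which amounts to a careful reindexing. Ties in efficiency cause no trouble, because Smith-ordered means only non-increasing, and that property is preserved under reversal. No step is a real obstacle; the point is to present the bijection as acting on assignment-plus-order pairs so that it is transparently well defined on each machine independently.
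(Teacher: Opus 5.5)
Your proposal is correct and follows the paper's proof essentially step for step. Both arguments use the same four ingredients: the per-machine double-sum exchange showing that the reversed sequence in $\mathcal I^\dagger$ reproduces the same objective, the involution argument for bijectivity, removal of idle time to equate the optima, and the identity $e_j^\dagger = 1/e_j$ combined with reversal to preserve Smith ordering.
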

\begin{proof}
Consider a machine whose job sequence in a schedule for $\mathcal I$ is $(j_1,\ldots,j_q)$. Its contribution to the objective is
\begin{align*}
\sum_{t=1}^q w_{j_t}C_{j_t}
&=\sum_{t=1}^q w_{j_t}\sum_{s=1}^t p_{j_s}
=\sum_{s=1}^q p_{j_s}\sum_{t=s}^q w_{j_t}.
\end{align*}
In the reversed sequence $(j_q,\ldots,j_1)$ for $\mathcal I^\dagger$, job~$j_s$ has completion time $C_{j_s}^\dagger=\sum_{t=s}^q p_{j_t}^\dagger=\sum_{t=s}^q w_{j_t}$. Hence, the last expression above is $\sum_{s=1}^q w_{j_s}^\dagger C_{j_s}^\dagger$. Summing over the machines proves that the transformation preserves the objective value. Since applying it twice returns the original instance and schedule, it is a bijection. Moreover, as neither instance has release dates, idle time can be removed without increasing the objective, so both instances admit an optimal non-idling schedule and have the same optimum value. Finally,
\[
\frac{w_j^\dagger}{p_j^\dagger}=\frac{p_j}{w_j}=\frac{1}{e_j}.
\]
Thus, reversing a non-increasing order of the original efficiencies yields a non-increasing order of the dual efficiencies, so the transformation preserves Smith ordering.
\end{proof}

The maximum processing time of $\mathcal I^\dagger$ is $w_{\max}$. Since the input is preordered by Smith's Rule, its reverse is a valid priority ordering for $\mathcal I^\dagger$ and can be produced in $O(n)$ time. We may therefore run the dynamic program above on whichever of $\mathcal I$ and $\mathcal I^\dagger$ has smaller maximum processing time and transform the resulting schedule back using Lemma~\ref{lem:WCT-duality}.

\begin{theorem}
$Pm||\sum w_jC_j$ can be solved in $O(\min\{p_{\max},w_{\max}\}^{2m-2} \cdot n)$ time.
\end{theorem}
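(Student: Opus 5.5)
The theorem is essentially an assembly of the results proved earlier in this section, so I will prove it by combining them in three steps.

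\textbf{Step 1: the pruned DP is correct and runs in $O(p_{\max}^{2m-2}\cdot n)$ time.} Lemma~\ref{lem:wjCj-SwapOpt} gives $\sum w_jC_j(\sigma')\le\sum w_jC_j(\sigma)$ for the intermediate schedule. Sorting each machine by Smith's Rule cannot increase the objective, so $\sum w_jC_j(\sigma'')\le \sum w_jC_j(\sigma)$. Hence the swap is optimal for $f=\sum w_jC_j$, and Theorem~\ref{thm:GenericStructure} and Corollary~\ref{cor:BoundedDeviation} apply. So some optimal proper schedule has every prefix load within $\const p_{\max}^2$ of $P(J_j)/m$.

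This schedule survives all restricted states of the pruned recursion. Moreover, every finite entry $T_j[\cdot]$ is the value of an actual proper schedule of $J_j$, by the Lawler--Moore argument: the completion time of job $j$ equals the new load of its machine. Therefore $T_n$ at the state matching the final loads equals $\mathrm{OPT}$.

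For the running time, there are $(8p_{\max}^2+1)^{m-1}$ offsets per step. The offsets $P_i$ may be fractional (multiples of $1/m$), but they range over a grid of $O(m\,p_{\max}^2)$ values per coordinate, so this costs only a constant factor for fixed $m$. Each transition takes $O(m)$ time, giving $O(p_{\max}^{2m-2}n)$ overall. The schedule itself is recovered by standard backtracking.

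\textbf{Step 2: duality.} If $w_{\max}<p_{\max}$, form $\mathcal I^\dagger$ with $p^\dagger_j=w_j$ and $w^\dagger_j=p_j$. The reversed input order is a valid Smith order for $\mathcal I^\dagger$, since $e^\dagger_j=1/e_j$. Its maximum processing time is $w_{\max}$. Running Step~1 on $\mathcal I^\dagger$ takes $O(w_{\max}^{2m-2}n)$ time.

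By Lemma~\ref{lem:WCT-duality}, $\mathrm{OPT}(\mathcal I)=\mathrm{OPT}(\mathcal I^\dagger)$. Reversing the sequence on each machine of the returned optimal schedule gives a schedule for $\mathcal I$ with the same value, hence an optimal one, in $O(n)$ time. Choosing between $\mathcal I$ and $\mathcal I^\dagger$ and reversing the order cost $O(n)$, which is dominated.

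\textbf{Step 3: the preprocessing cost.} The algorithm assumes the jobs are already in Smith order, as the paper's model does. If they are not, sorting adds $O(n\log n)$ time. This term is absorbed whenever $\min\{p_{\max},w_{\max}\}\ge2$ and $m\ge2$, so I will note that the presorted-input convention of the preliminaries is in force.

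\textbf{Main obstacle.} The one step needing care is the correctness of the pruned recursion: pruning must never cut off the guaranteed balanced optimal schedule, and the boundary handling must be right. Concretely, the transitions use the shifted offsets $P'_i=P_i+p_j/m$, and the balanced schedule's offsets must stay inside the allowed window at every prefix $j$. This is exactly what Corollary~\ref{cor:BoundedDeviation} provides, since it holds for all $j$ simultaneously. With that in hand, the rest is bookkeeping.
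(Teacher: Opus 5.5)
Your proposal is correct and follows the paper's argument essentially verbatim: Lemma~\ref{lem:wjCj-SwapOpt} plus Smith re-sorting makes the swap optimal, and Theorem~\ref{thm:GenericStructure} with Corollary~\ref{cor:BoundedDeviation} justifies pruning the Lawler--Moore DP to $O(p_{\max}^{2m-2}\cdot n)$ states (your explicit treatment of the offsets as a $1/m$-grid is, if anything, more careful than the paper's integer indexing), while Lemma~\ref{lem:WCT-duality} lets you run the DP on whichever of $\mathcal I,\mathcal I^\dagger$ has the smaller maximum processing time. One side remark is wrong, though harmless: the $O(n\log n)$ sorting cost is not absorbed just because $\min\{p_{\max},w_{\max}\}\ge 2$ and $m\ge 2$ (for $m=2$ and $\min\{p_{\max},w_{\max}\}=2$ the bound is $O(n)$), so you genuinely need the presorted-input convention you fall back on, exactly as the paper does.
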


\section{The \boldmath{$Pm||L_{\max}$} problem}

In the $Pm||L_{\max}$ problem, the priority order is in non-decreasing due dates $d_1 \leq d_2 \leq \cdots \leq d_n$, due to Jackson's Rule~\cite{Jackson1955}. Below we show that our framework applies to this problem as well. As in the previous section, we begin by proving that our swap is optimal for $L_{\max}$, and then present the pruned Lawler-Moore recursion for $Pm||L_{\max}$. 

\begin{lemma}
\label{lem:Swap-Cjs2}
Let $\sigma'$ be the intermediate schedule obtained from~$\sigma$ by a swap at step~$j^* \in J$ between machines $M_h$ and $M_i$. Then $C'_j\leq C_{j^*}$ for all $j\in J_{H} \cup J_{I}$.
\end{lemma}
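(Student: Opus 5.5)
The plan is to bound every job of $J_H \cup J_I$ in $\sigma'$ by $t_h$, and then observe that $t_h = C_{j^*}$. The swap is performed at step $j^*$ between $M_h$ and $M_i$ with $j^* \in J_{h,j^*}(\sigma)$. Since $\sigma$ is proper, $j^*$ is the last job of $J_{j^*}$ on $M_h$, so it completes exactly at time $t_h = P(J_{h,j^*}(\sigma)) = C_{j^*}$. Because $J_H$ consists of the last $2p_{\max}$ jobs of $J_{h,j^*}(\sigma)$, the job $j^*$ belongs to $J_H$, and the block $J_H$ occupies the interval ending at $t_h$ on $M_h$.

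For jobs that end up on $M_h$ in $\sigma'$, the block $J_H$ is replaced by $J_{H''}$ followed by $J_{I'}$. This new block starts at the same time as the old block $J_H$ did. Its total length is $P(J_{H''}) + P(J_{I'}) = P(J_H) - P(J_{H'}) + P(J_{I'}) = P(J_H)$, using $\Sigma(H') = \Sigma(I')$ from Lemma~\ref{lem:equal-sum-set}. Hence every job of $J_{H''} \cup J_{I'}$ completes no later than the end of the block, which is $t_h = C_{j^*}$.

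For jobs that end up on $M_i$ in $\sigma'$, the block $J_I$ starts at $t_i$ and is replaced by $J_{H'}$ followed by $J_{I''}$. This block also has total length $P(J_{H'}) + P(J_{I''}) = P(J_I)$, so all its jobs complete by $t_i + P(J_I)$. Since $P(J_I) \le 2p_{\max}^2$ and $t_h - t_i = \Delta_{h,i,j^*}(\sigma) \ge 4p_{\max}^2$, we get $t_i + P(J_I) \le t_h = C_{j^*}$.

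The only delicate point is making sure that jobs outside the blocks do not affect these bounds. This holds because each replacement block has the same length as the block it replaces. So nothing else moves, and the completion times of jobs in $J_H \cup J_I$ are governed solely by the start of their block plus the prefix length within it. This completes the plan, and no step seems hard beyond this accounting.
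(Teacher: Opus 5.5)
Your proof is correct and follows the same route as the paper, which simply notes that by construction every swapped job finishes by $t_i+\Delta_{h,i,j^*}(\sigma)=t_h=C_{j^*}$. You spell out the block-length accounting that the paper leaves implicit: $\Sigma(H')=\Sigma(I')$ on $M_h$, and $P(J_I)\le 2p_{\max}^2\le\Delta_{h,i,j^*}(\sigma)$ on $M_i$. One small correction: what keeps each block's start time fixed is that the jobs preceding it are untouched; equal block lengths only matter for the jobs after the block, which this lemma does not need.
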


\begin{proof}
By construction, all swapped jobs finish at the latest at time $t_i+\Delta_{h,i,j^*}= t_h=C_{j^*}$. \qed 
\end{proof}

\begin{lemma}
\label{lem:Lmax-SwapOpt}
$L_{\max}(\sigma') \leq L_{\max}(\sigma)$.
\end{lemma}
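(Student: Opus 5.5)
We compare lateness job by job, splitting $J$ into the jobs untouched by the swap, the jobs of $J_H$, and the jobs of $J_I$. For $j\notin J_H\cup J_I$ the completion time is unchanged, $C'_j=C_j$. This holds because only jobs of $J_H\cup J_I$ change position. Jobs on $M_h$ after $J_H$ and jobs on $M_i$ after $J_I$ keep their start times, since $\Sigma(H')=\Sigma(I')$ means the total load of the replaced blocks is preserved; this is item~1 of Lemma~\ref{lem:Swap-Cjs}, whose proof does not depend on the objective. Hence $L'_j=L_j\le L_{\max}(\sigma)$ for these jobs.

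For a swapped job $j\in J_H\cup J_I$, Lemma~\ref{lem:Swap-Cjs2} gives $C'_j\le C_{j^*}$. The key point is that every such job has due date at least $d_{j^*}$. For $j\in J_I$ we have $j>j^*$, so $d_j\ge d_{j^*}$ by the EDD preordering. For $j\in J_H\subseteq J_{h,j^*}(\sigma)$ we have $j\le j^*$, so this direct comparison fails. Instead I would use the bound $C'_j\le C_j$ from item~2 of Lemma~\ref{lem:Swap-Cjs}, which again is purely about positions and holds for any proper $\sigma$. This gives $L'_j\le L_j\le L_{\max}(\sigma)$. For $j\in J_I$ we get
\[
L'_j=C'_j-d_j\le C_{j^*}-d_{j^*}=L_{j^*}\le L_{\max}(\sigma).
\]

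Taking the maximum over all jobs yields $L_{\max}(\sigma')\le L_{\max}(\sigma)$. Re-sorting each machine by EDD to obtain $\sigma''$ does not increase $L_{\max}$ by Jackson's rule, so the swap is optimal for $L_{\max}$.

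The only delicate step is the treatment of $J_H$. We must make sure Lemma~\ref{lem:Swap-Cjs}, stated in the weighted-completion-time section, applies verbatim here. It does: its proof uses only the structure of the swap and not the ordering by efficiencies. Otherwise one argues directly that jobs of $J_{H'}$ finish before $t_i+P(J_I)$, which is before they started in $\sigma$, and that jobs of $J_{H''}$ only move earlier.
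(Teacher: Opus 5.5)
Your proof is correct and follows the paper's argument essentially verbatim. You use the same three-way split: jobs of $J_H$ (via $C'_j\le C_j$ from Lemma~\ref{lem:Swap-Cjs}), jobs of $J_I$ (via $d_j\ge d_{j^*}$ together with Lemma~\ref{lem:Swap-Cjs2}), and untouched jobs, for which your remark that $\Sigma(H')=\Sigma(I')$ preserves all later start times makes explicit what the paper leaves implicit in item~1 of Lemma~\ref{lem:Swap-Cjs}.
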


\begin{proof}
Suppose $\sigma'$ is obtained from $\sigma$ by a swap at index $j^* \in J$. For each job $j \in J$, write $L_j:=C_j-d_j$ and $L'_j:=C'_j-d_j$. If $j\in J_H$, then by Lemma~\ref{lem:Swap-Cjs} we have $C'_j \leq C_j$, and therefore $L'_j \leq L_j\leq L_{\max}(\sigma)$. If $j\in J_I$, then $j>j^*$ and so $d_j\geq d_{j^*}$. By Lemma~\ref{lem:Swap-Cjs2},
$$
L'_j \;=\; C'_j-d_j \;\leq\; C_{j^*}-d_{j^*} \;=\; L_{j^*} \leq L_{\max}(\sigma).
$$
If $j\notin J_H\cup J_I$, then Lemma~\ref{lem:Swap-Cjs} gives $L'_j=L_j\le L_{\max}(\sigma)$. Thus $L'_j \leq L_{\max}(\sigma)$ for every job~$j\in J$, and the lemma follows. \qed
\end{proof}

The Lawler-Moore dynamic program for $Pm||L_{\max}$ formulates a state $T_j[P_1,\ldots,P_{m-1}]$ as the minimum maximum lateness $L_{\max}(\sigma)$ among all proper schedules~$\sigma$ for~$J_j$ where the total load on machine~$M_i$ is~$P_i$, for each $1 \leq i \leq m-1$. Due to Lemma~\ref{lem:Lmax-SwapOpt}, Theorem~\ref{thm:GenericStructure} and Corollary~\ref{cor:BoundedDeviation} apply to $L_{\max}$. Thus, we can again restrict the state variables of the dynamic program to values $P_1,\ldots,P_{m-1} \in \{-4p^2_{\max},\ldots,4p^2_{\max}\}$, where $T_j[P_1,\ldots,P_{m-1}]$ now corresponds to schedules where the total load on machine~$M_i$ is~$P(J_j)/m+P_i$. As in the previous section, we may assume that the processing time of each job is a multiple of~$m$. The resulting pruned Lawler-Moore recursion is given by:
$$
T_j[P_1,\ldots,P_{m-1}] = \min
\begin{cases}
\quad \max \Big\{T_{j-1}[P'_1 - p_j,\ldots,P'_{m-1}],\, P(J_j)/m+P_1 - d_j\Big\},\\ 
\quad \vdots\\
\quad \max \Big\{T_{j-1}[P'_1,\ldots,P'_{m-1} - p_j],\, P(J_j)/m + P_{m-1} - d_j\Big\},\\
\quad \max \Big\{T_{j-1}[P_1,\ldots,P_{m-1}],\, P(J_j)/m  -  \sum^{m-1}_{i=1} P_{i} - d_j \Big\}, \\
\end{cases}
$$
where $P'_i = P_i+p_j/m$ for all $1 \leq i \leq m-1$. Correctness of this recursion follows from the correctness of the original Lawler-Moore recurrence and Theorem~\ref{thm:GenericStructure}. The running time of the resulting dynamic program is determined by the number of states, which is $O(p^{2m-2}_{\max} \cdot n)$. 
\begin{theorem}
$Pm|| L_{\max}$ can be solved in $O(p^{2m-2}_{\max} \cdot n)$ time.    
\end{theorem}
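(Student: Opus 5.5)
The proof combines three ingredients already in place: the optimality of the swap for $L_{\max}$ (Lemma~\ref{lem:Lmax-SwapOpt}), the generic structure theorem (Theorem~\ref{thm:GenericStructure}) with Corollary~\ref{cor:BoundedDeviation}, and the correctness of the classical Lawler-Moore recurrence restricted to a pruned state space.

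\textbf{Step 1: the swap is optimal.} Let $\sigma$ be any proper schedule and $\sigma''$ the result of an admissible swap. Lemma~\ref{lem:Lmax-SwapOpt} gives $L_{\max}(\sigma') \le L_{\max}(\sigma)$. Passing from $\sigma'$ to $\sigma''$ only re-sorts each of $M_h$ and $M_i$ into EDD order. By Jackson's exchange argument on a single machine, re-sorting into EDD order never increases the maximum lateness on that machine. Hence $L_{\max}(\sigma'') \le L_{\max}(\sigma)$, so the swap is optimal for $f=L_{\max}$. Since $L_{\max}$ is regular, Corollary~\ref{cor:BoundedDeviation} applies: there is an optimal proper schedule in which every prefix load $P(J_{i,j}(\sigma))$ lies within $4p_{\max}^2$ of $P(J_j)/m$.

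\textbf{Step 2: the pruned dynamic program.} To keep offsets integral, I first scale all $p_j$ and $d_j$ by $m$. This multiplies every lateness by $m$ and preserves the optimal schedule. The state bound becomes $4m^2p_{\max}^2 = O(p_{\max}^2)$ for constant $m$, so nothing asymptotic changes. A state $T_j[P_1,\ldots,P_{m-1}]$ has offsets $P_i \in \{-4p_{\max}^2,\ldots,4p_{\max}^2\}$ (after scaling). It records the least $L_{\max}$ over proper schedules of $J_j$ in which machine $M_i$ has load $P(J_j)/m+P_i$ for $i\le m-1$ and every prefix step is balanced. States outside the range are set to $\infty$.

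\textbf{Step 3: correctness.} I argue by induction on $j$ that the recurrence computes $T_j$ exactly.
\begin{itemize}
\item If job $j$ goes last on $M_i$, its completion time is that machine's current load. The predecessor state is obtained by subtracting $p_j$ from $P_i$, re-centered by the shift $p_j/m$ in the average. This gives $P'_i=P_i+p_j/m$ as written.
\item The maximum lateness of the extended schedule is the maximum of the old value and $C_j-d_j$.
\end{itemize}
Every value the table produces corresponds to an actual schedule, so no table value is below the optimum. The balanced optimal schedule from Step 1 has all of its prefix states inside the pruned range, so its value is attained and the table value is at most the optimum. Taking the minimum over final states therefore yields the optimum.

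\textbf{Step 4: running time.} There are $O(p_{\max}^{2m-2})$ states per $j$, and each is evaluated in $O(m)$ time. Sorting by EDD adds $O(n\log n)$; with integer due dates this is dominated, or can be assumed as a preordering, as the paper assumes throughout.

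\textbf{Main obstacle.} The subtle point is the index bookkeeping in the re-centered recurrence. The predecessor of $j$ is $J_{j-1}$, whose average differs from that of $J_j$ by $p_j/m$. One must check that a balanced optimal schedule's predecessor state really lies in the pruned range. This holds because Corollary~\ref{cor:BoundedDeviation} bounds every step $j$ simultaneously, not just the final one.
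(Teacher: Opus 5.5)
Your proposal is correct and follows the paper's proof: optimality of the swap for $L_{\max}$ (Lemma~\ref{lem:Lmax-SwapOpt}) activates Theorem~\ref{thm:GenericStructure} and Corollary~\ref{cor:BoundedDeviation}, which justify running the Lawler--Moore recurrence only over re-centered offsets of magnitude $O(p_{\max}^2)$, i.e.\ $O(p_{\max}^{2m-2})$ states per step. You also make explicit two points the paper leaves implicit: that EDD re-sorting from $\sigma'$ to $\sigma''$ cannot increase $L_{\max}$, and that $d_j$ must be scaled by $m$ together with $p_j$. Your aside that sorting is dominated ``with integer due dates'' is not justified in general, but it is also unnecessary, since the paper assumes the input is preordered.
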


\section{The \boldmath{$Pm||\sum w_jU_j$} problem}

As a final application of our framework, we next consider the $Pm||\sum w_jU_j$ problem. Here the jobs are ordered according to Jackson's Rule, and so we assume that $d_1 \leq d_2 \leq \cdots \leq d_n$. For the $\sum w_jU_j$ objective, it
is convenient to use the equivalent formulation in which a tardy job is simply discarded and incurs a penalty of~$w_j$. Thus a feasible solution consists of a subset
of scheduled jobs $J'\subseteq J$, processed according to Jackson's Rule, and every scheduled job must complete by its due date.

\begin{lemma}
\label{lem:wjUj-SwapOpt}
Let $\sigma'$ be the intermediate schedule obtained from~$\sigma$. Then $\sum w_jU_j(\sigma') \leq \sum w_jU_j(\sigma)$.
\end{lemma}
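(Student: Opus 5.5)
Since $\sigma$ and $\sigma'$ contain exactly the same jobs, the inequality follows once I show that every job that is early (i.e.\ scheduled) in $\sigma$ remains early in $\sigma'$. The tardy-job set of $\sigma'$ is then contained in that of $\sigma$. I will work in the formulation where tardy jobs are discarded. In $\sigma$ they sit at the end of their machine and can be moved there freely. The swap itself is applied to the proper schedule of scheduled (early) jobs, so $J_H$ and $J_I$ consist of early jobs of $\sigma$, and the loads $P(J_{i,j}(\sigma))$ count only scheduled jobs. Under this reading the admissibility conditions and the construction of $\sigma'$ are unchanged, and the structure of Lemma~\ref{lem:Swap-Cjs} and Lemma~\ref{lem:Swap-Cjs2} still applies verbatim. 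Their proofs use only the positions of jobs on $M_h$ and $M_i$, not the objective.

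I will then check the jobs by cases, exactly as in Lemma~\ref{lem:Lmax-SwapOpt}. If $j\notin J_H\cup J_I$, Lemma~\ref{lem:Swap-Cjs}(1) gives $C'_j=C_j\le d_j$, so $j$ stays early. If $j\in J_H$, Lemma~\ref{lem:Swap-Cjs}(2) gives $C'_j\le C_j\le d_j$. If $j\in J_I$, then $j>j^*$, so Jackson's ordering gives $d_j\ge d_{j^*}$. By Lemma~\ref{lem:Swap-Cjs2}, $C'_j\le C_{j^*}\le d_{j^*}\le d_j$. The last inequality uses that $j^*$ is early in $\sigma$, which holds because $j^*\in J_{h,j^*}(\sigma)$ is a scheduled job under our convention. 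Hence every early job of $\sigma$ is early in $\sigma'$, and $\sum w_jU_j(\sigma')\le\sum w_jU_j(\sigma)$. Sorting to obtain $\sigma''$ by EDD does not create new tardy jobs, by the standard Jackson exchange argument on each machine.

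The main obstacle is the bookkeeping around discarded jobs. The swap and the structure theorem have to be interpreted over scheduled jobs only, and $j^*$ must itself be early so that $C_{j^*}\le d_{j^*}$. Otherwise $C_{j^*}$ is not a useful bound. I will handle this by defining the loads at step $j$ using only scheduled jobs, which is also the quantity tracked by the $m$-dimensional Lawler--Moore state. Under this definition the last job from $J_{j^*}$ on $M_h$ is always a scheduled job.
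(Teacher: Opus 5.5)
Your proposal is correct and follows the paper's proof. It uses the same case split on $j\notin J_H\cup J_I$, $j\in J_H$ and $j\in J_I$, and invokes Lemma~\ref{lem:Swap-Cjs}, Lemma~\ref{lem:Swap-Cjs2} and the EDD ordering to obtain $C'_j\le C_{j^*}\le d_{j^*}\le d_j$. Your remark that $j^*$ must itself be a scheduled (early) job, so that $C_{j^*}\le d_{j^*}$ holds, spells out a point the paper's proof uses only implicitly.
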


\begin{proof}
Let $\sigma'$ be obtained from $\sigma$ by a swap at index $j^*$. The swap does not change the set of discarded jobs, so it suffices to show that every job scheduled in $\sigma'$ still completes by its due date. If $j\in J_H$, then Lemma~\ref{lem:Swap-Cjs} yields
$C'_j\le C_j\le d_j$, so $j$ remains early. If $j\in J_I$, then $j>j^*$ and $d_j\ge d_j^*$. Since job~$j$ is scheduled in $\sigma$, we have $C_j \le d_j$. Therefore, by Lemma~\ref{lem:Swap-Cjs2}, we have
$$
C'_j \;\leq\; C_{j^*} \;\leq\; d_{j^*} \;\leq\; d_j,
$$
and so~$j$ is also on time in $\sigma'$. If $j\notin J_H\cup J_I$, then Lemma~\ref{lem:Swap-Cjs} gives $C'_j=C_j\le d_j$. Thus every scheduled job in $\sigma'$ is early, and the set of discarded jobs is unchanged. The lemma thus follows. \qed
\end{proof}

Due to Lemma~\ref{lem:wjUj-SwapOpt}, our generic structure theorem (Theorem~\ref{thm:GenericStructure}) applies to the $\sum w_jU_j$ objective as well. Here we diverge from the previous algorithms by using the variables $P_i' := P_i-P_m$ for $1 \leq i \leq m-1$, and also keeping track of the absolute load~$P_m$, which is necessary since discarded jobs make the total scheduled load variable. By Theorem~\ref{thm:GenericStructure}, we may restrict our attention to proper schedules~$\sigma$ with $\lvert P_i' \rvert < \const p^2_{\max}$ for all $1 \leq i \leq m-1$ and $1 \leq j \leq n$. Accordingly, we define a state $T_j[P_1',\ldots,P_{m-1}',P_m]$ as the minimum total weighted number of tardy jobs among all proper schedules~$\sigma$ for~$J_j$ such that $P(J_{i,j}(\sigma)) - P(J_{m,j}(\sigma)) = P_i'$ for each $1 \leq i \leq m-1$, and $P(J_{m,j}(\sigma)) = P_m$. The resulting pruned Lawler-Moore recursion is
$$
T_j[P_1',\ldots,P_{m-1}',P_m] = \min
\begin{cases}
\quad \begin{cases}
T_{j-1}[P_1' - p_j,\ldots,P_{m-1}',P_m] \;&:\; P_m+P_1' \leq d_j,\\
\infty \;&:\; \text{otherwise}.
\end{cases}\\
\quad \vdots\\
\quad \begin{cases}
T_{j-1}[P_1',\ldots,P_{m-1}'-p_j,P_m] \;&:\; P_m+P_{m-1}' \leq d_j,\\
\infty \;&:\; \text{otherwise}.
\end{cases}\\
\quad \begin{cases}
T_{j-1}[P_1'+p_j,\ldots,P_{m-1}'+p_j,P_m-p_j] \;&:\; P_m \leq d_j,\\
\infty \;&:\; \text{otherwise}.
\end{cases}\\
\quad T_{j-1}[P_1',\ldots,P_{m-1}',P_m] + w_j, \\
\end{cases}
$$
Here the first $m-1$ cases correspond to scheduling job~$j$ early on one of the machines $M_1,\ldots,M_{m-1}$, the next case corresponds to scheduling it early on~$M_m$, and the last case corresponds to discarding~$j$ and paying penalty~$w_j$. Naturally, we set $T_j[P_1',\ldots,P_{m-1}',P_m]=\infty$ whenever some $P_i' \notin \{-\const p^2_{\max},\ldots,\const p^2_{\max}\}$ or $P_m \notin \{0,\ldots,P\}$. Correctness of the recursion follows from the correctness of the original Lawler-Moore recurrence and Theorem~\ref{thm:GenericStructure}. The number of states, and hence the running time of the resulting dynamic program, is $O(p^{2m-2}_{\max} \cdot P \cdot n)$. 
\begin{theorem}
$Pm||\sum w_jU_j$ can be solved in $O(p^{2m-2}_{\max} \cdot P \cdot n)$ time.   
\end{theorem}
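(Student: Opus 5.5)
The proof assembles three ingredients already in place: optimality of the swap for $\sum w_jU_j$, the generic structure theorem, and the correctness of the pruned recursion. It then counts states.

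First, by Lemma~\ref{lem:wjUj-SwapOpt}, any swap leaves the set of discarded jobs unchanged and keeps every scheduled job early in the intermediate schedule $\sigma'$. Re-sorting the jobs on $M_h$ and $M_i$ by EDD to obtain $\sigma''$ cannot make an early job tardy, since Jackson's Rule is optimal on a single machine for meeting all due dates of a given set. Hence $\sum w_jU_j(\sigma'')\le \sum w_jU_j(\sigma)$, so the swap is optimal for $f=\sum w_jU_j$.

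One subtlety needs care. Theorem~\ref{thm:GenericStructure} is phrased for schedules of all of $J$, while here discarded jobs are removed. I would apply the theorem to the instance restricted to the set $J'$ of early jobs of some optimal schedule. Every feasible schedule of $J'$ with all jobs early is optimal for the original problem, and the regular objective there is simply ``$0$ if all jobs are early, $\infty$ otherwise''. The swap is optimal for this objective by the argument above. So there is an EDD-proper schedule of $J'$ whose prefix loads differ pairwise by at most $4p_{\max}^2$.

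Prefixes of $J'$ with respect to $J'$'s ordering correspond to steps $j$ of $J$, with the discarded jobs ignored. Hence the differences $P_i'=P(J_{i,j})-P(J_{m,j})$ lie in $\{-4p_{\max}^2,\ldots,4p_{\max}^2\}$ at every step $j$. The same bound holds on the ``step'' where a discarded job is considered, since the loads are unchanged there.

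Next, I verify the recursion. Placing job $j$ last on $M_i$, for $i<m$, sets its completion time to the new $P_m+P_i'$; the recursion checks this against $d_j$, which is valid because the jobs are processed in EDD order. Placing job $j$ on $M_m$ raises $P_m$ by $p_j$ and lowers each difference by $p_j$. Discarding job $j$ costs $w_j$. This mirrors the original Lawler-Moore recurrence, with its states restricted to those consistent with the structured optimum. Because that optimum passes only through allowed states, the minimum over the pruned table equals $\mathrm{OPT}$, and no infeasible schedule is ever counted.

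Finally, there are $O(p_{\max}^2)$ choices for each of the $m-1$ differences, $P+1$ choices for $P_m$, and $n$ steps. Each transition takes $O(m)=O(1)$ time, giving $O(p_{\max}^{2m-2}\cdot P\cdot n)$. The only delicate step is the restriction to early jobs just described; the rest is bookkeeping.
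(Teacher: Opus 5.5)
Your proposal is correct and follows the paper's route: swap optimality via Lemma~\ref{lem:wjUj-SwapOpt}, then Theorem~\ref{thm:GenericStructure}, then the pruned recursion over $(P_1',\ldots,P_{m-1}',P_m)$, then the state count. Your explicit reduction to the early-job set $J'$ with the objective ``$0$ if all jobs are early, $\infty$ otherwise'' is a useful tightening of a step the paper leaves implicit. The paper needs it because Theorem~\ref{thm:GenericStructure} is stated for proper schedules of all of $J$, whereas here tardy jobs are discarded.
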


\section{Discussion}


In this paper, we generalize the fine-grained parameterization introduced by Klein, Polak, and Rohwedder~\cite{Klein0R23} by replacing the traditional pseudo-polynomial dependence on the total processing time P with a dependence on the maximum job size~$p_{\max}$ across the broader Lawler-Moore framework. For total weighted completion time, the processing-time/weight duality further strengthens this dependence to $\min\{p_{\max},w_{\max}\}$. Our results provide significant algorithmic speedups for fundamental parallel-machine problems and show that the SETH-based lower bounds previously associated with~$P$ can be bypassed beyond the specific case of minimizing tardy processing times. Moreover, we obtain a generic structural result on optimal solutions that extends the applicability of prefix-load bounding to complex sequencing and arbitrary weight functions. We believe that our generalized swapping framework is robust enough to accelerate a wide range of other dynamic programming recurrences in scheduling and combinatorial optimization.

Several interesting directions for future research remain. Below we provide a partial list of those we believe are the most important (in no particular order):
\begin{enumerate}
\item Can one solve $Pm||\sum w_jU_j$ in $w_{\max}^{O(1)}\cdot n$ time? The processing-time/weight duality for total weighted completion time does not extend to the weighted number of tardy jobs, so a positive answer appears to require a different approach.
\item Can $P2||\sum w_jC_j$ be solved in $\widetilde{O}(\min\{p_{\max},w_{\max}\}^2+n)$ time? For $P2||C_{\max}$, a running time of $\widetilde{O}(p_{\max}^2+n)$ follows from recent Knapsack algorithms~\cite{Bringmann24,Jin24}.
\item Assuming $\forall\exists$-SETH, Abboud et al.~\cite{AbboudBHS22} rule out $\widetilde{O}(n+p_{\max}\cdot n^{1-\varepsilon})$-time algorithms, for every fixed $\varepsilon>0$, for several scheduling problems on one or two machines, including $P2||L_{\max}$ and $1||\sum w_jU_j$. Their reductions also rule out $\widetilde{O}(n+p_{\max}^c)$-time algorithms for these problems for every fixed constant $c>0$. Can these lower bounds be strengthened, or extended to problems with more than two machines?
\item What about the $Rm$ machine model, where the~$p_{i,j}$'s are arbitrary. Can one get algorithms with running times of the form $p_{\max}^{O(1)}\cdot n$ in this environment as well? 
\end{enumerate}

\bibliographystyle{alpha}
\bibliography{biblo}

\newcommand{\etalchar}[1]{$^{#1}$}
\begin{thebibliography}{BFH{\etalchar{+}}22}

\bibitem[ABHS22]{AbboudBHS22}
Amir Abboud, Karl Bringmann, Danny Hermelin, and Dvir Shabtay.
\newblock Scheduling lower bounds via {AND} subset sum.
\newblock {\em Journal of Computer and System Sciences}, 127:29--40, 2022.

\bibitem[BDW26]{BringmannDW26}
Karl Bringmann, Anita D{\"u}rr, and Karol W\k{e}grzycki.
\newblock Tight {(S)ETH}-based lower bounds for pseudopolynomial algorithms for bin packing and multi-machine scheduling.
\newblock {\em CoRR}, abs/2603.12999, 2026.

\bibitem[BFH{\etalchar{+}}22]{BringmannFHSW22}
Karl Bringmann, Nick Fischer, Danny Hermelin, Dvir Shabtay, and Philip Wellnitz.
\newblock Faster minimization of tardy processing time on a single machine.
\newblock {\em Algorithmica}, 84(5):1341--1356, 2022.

\bibitem[BJS74]{BrunoCoffman}
John Bruno, Edward G.~Coffman Jr., and Ravi Sethi.
\newblock Scheduling independent tasks to reduce mean finishing time.
\newblock {\em Communications of the {ACM}}, 17:382--387, 1974.

\bibitem[Bri24]{Bringmann24}
Karl Bringmann.
\newblock Knapsack with small items in near-quadratic time.
\newblock In {\em Proc. of the 56th Annual {ACM} Symposium on Theory Of Computing ({STOC})}, pages 259--270, 2024.

\bibitem[CMM67]{ConwayMM1967}
Richard~W. Conway, William~L. Maxwell, and Leslie~W. Miller.
\newblock {\em Theory of Scheduling}.
\newblock Addison-Wesley, 1967.

\bibitem[FW25]{FischerWennmann25}
Nick Fischer and Leo Wennmann.
\newblock Minimizing tardy processing time on a single machine in near-linear time.
\newblock {\em TheoretiCS}, 4, 2025.

\bibitem[GJ78]{GJ78}
Michael~R. Garey and David~S. Johnson.
\newblock {S}trong {N}{P}-completeness results: motivation, examples, and implications.
\newblock {\em Journal of the ACM}, 25(3):499--508, 1978.

\bibitem[GJ90]{GareyJohnson}
Michael~R. Garey and David~S. Johnson.
\newblock {\em Computers and Intractability; A Guide to the Theory of NP-Completeness}.
\newblock W. H. Freeman \& Co., 1990.

\bibitem[Gra69]{Graham1969}
Ronald~L. Graham.
\newblock Bounds on multiprocessing timing anomalies.
\newblock {\em SIAM Journal on Applied Mathematics}, 17(2):416--429, 1969.

\bibitem[GW00]{GoemansWilliamson00}
Michel~X. Goemans and David~P. Williamson.
\newblock Two-dimensional gantt charts and a scheduling algorithm of {L}awler.
\newblock {\em {SIAM} Journal on Discrete Mathematics}, 13(3):281--294, 2000.

\bibitem[HH24]{HeegerH24}
Klaus Heeger and Danny Hermelin.
\newblock Minimizing the weighted number of tardy jobs is {W}[1]-hard.
\newblock In {\em Proc. of the 32nd Annual European Symposium on Algorithms ({ESA})}, pages 68:1--68:14, 2024.

\bibitem[HKPS21]{HermelinKPShabtay21}
Danny Hermelin, Shlomo Karhi, Michael~L. Pinedo, and Dvir Shabtay.
\newblock New algorithms for minimizing the weighted number of tardy jobs on a single machine.
\newblock {\em Annals of Operations Research}, 298(1):271--287, 2021.

\bibitem[HMS24]{HermelinMS24}
Danny Hermelin, Hendrik Molter, and Dvir Shabtay.
\newblock Minimizing the weighted number of tardy jobs via (max,+)-convolutions.
\newblock {\em {INFORMS} Journal of Computing}, 36(3):836--848, 2024.

\bibitem[HS26]{HermelinShabtay2026}
Danny Hermelin and Dvir Shabtay.
\newblock Fast makespan minimization via short {ILP}s.
\newblock {\em CoRR}, abs/2502.13631, 2026.

\bibitem[Jac55]{Jackson1955}
James~R. Jackson.
\newblock Scheduling a production line to minimize maximum tardiness.
\newblock {\em Management Science Research Project, University of California, Los Angeles, CA.}, 1955.

\bibitem[Jin24]{Jin24}
Ce~Jin.
\newblock 0-1 knapsack in nearly quadratic time.
\newblock In {\em Proc. of the 56th Annual {ACM} Symposium on Theory Of Computing (STOC)}, pages 271--282, 2024.

\bibitem[KK18]{KnopKoutecky18}
Dusan Knop and Martin Kouteck{\'{y}}.
\newblock Scheduling meets $n$-fold integer programming.
\newblock {\em Journal of Scheduling}, 21(5):493--503, 2018.

\bibitem[KPR23]{Klein0R23}
Kim{-}Manuel Klein, Adam Polak, and Lars Rohwedder.
\newblock On minimizing tardy processing time, max-min skewed convolution, and triangular structured {ILP}s.
\newblock In {\em Proc. of the 34th {ACM-SIAM} Symposium on Discrete Algorithms (SODA)}, pages 2947--2960, 2023.

\bibitem[LLRK76]{LagewegEtAl1976}
B.~J. Lageweg, J.~K. Lenstra, and A.~H.~G. Rinnooy~Kan.
\newblock Minimizing maximum lateness on one machine: Computational experience and some applications.
\newblock {\em Management Science}, 23(10):1151--1163, 1976.

\bibitem[LM69]{LawlerMoore}
Eugene~L. Lawler and James~M. Moore.
\newblock A functional equation and its application to resource allocation and sequencing problems.
\newblock {\em Management Science}, 16(1):77--84, 1969.

\bibitem[LRKB77]{LenstraEtAl77}
J.K. Lenstra, A.H.G. Rinnooy~Kan, and P.~Brucker.
\newblock Complexity of machine scheduling problems.
\newblock {\em Annals of Discrete Mathematics}, 1:343--362, 1977.

\bibitem[Moo68]{Moore1968}
James~M. Moore.
\newblock An $n$ job, one machine sequencing algorithm for minimizing the number of late jobs.
\newblock {\em Management Science}, 15:102--109, 1968.

\bibitem[PRW21]{PolakEtAl21}
Adam Polak, Lars Rohwedder, and Karol W\k{e}grzycki.
\newblock Knapsack and subset sum with small items.
\newblock In {\em Proc. of the 48th International Colloquium on Automata, Languages, and Programming (ICALP)}, pages 106:1--106:19, 2021.

\bibitem[Smi56]{Smith1956}
Wayne~E. Smith.
\newblock Various optimizers for single-stage production.
\newblock {\em Naval Research Logistics Quarterly}, 3:59--66, 1956.

\bibitem[SS23]{SchieberS23}
Baruch Schieber and Pranav Sitaraman.
\newblock Quick minimization of tardy processing time on a single machine.
\newblock In {\em Proc. of the 18th International Symposium on Algorithms and Data Structures ({WADS})}, pages 637--643, 2023.

\end{thebibliography}

\end{document}